\theoremstyle{plain}
\newtheorem{theorem}{Theorem}[section]
\newtheorem{lemma}[theorem]{Lemma}
\newtheorem{corollary}[theorem]{Corollary}
\newcommand{\argmax}{\operatornamewithlimits{argmax}}
\title{A Class of Solvable Optimal Stopping Problems of Spectrally Negative Jump Diffusions}
\author{Luis H. R. Alvarez E.\thanks{Department of Accounting and Finance, Turku School of Economics,
FIN-20014 University of Turku, Finland, E-mail: lhralv@utu.fi} \and Pekka Matomäki\thanks{Department of Accounting and Finance, Turku School of Economics,
FIN-20014 University of Turku, Finland, E-mail: pjsila@utu.fi}  \and
Teppo A. Rakkolainen\thanks{Veritas Pension Insurance, P.O. Box 133, FI-20700 Turku, Finland, E-mail: teppo.rakkolainen@gmail.com}}
\begin{document}

\maketitle

\begin{abstract}
We consider the optimal stopping of a class of spectrally
negative jump diffusions. We state a set of conditions under which
the value is shown to have a representation in terms of an ordinary
nonlinear programming problem. We establish a connection between the
considered problem and a stopping problem of an associated
continuous diffusion process and demonstrate how this connection may
be applied for characterizing the stopping policy and its value. We
also establish a set of typically satisfied conditions under which
increased volatility as well as higher jump-intensity decelerates
rational exercise by increasing the value and expanding the
continuation region.
\end{abstract}

\noindent{\bf Keywords:}
jump diffusions, optimal stopping,
nonlinear programming, certainty equivalence\\

\noindent{\bf AMS subject classification:}
60G40, 60J60, 60J75

\thispagestyle{empty} \clearpage \setcounter{page}{1}

\section{Introduction}
It is a well-known result from literature on mathematical finance
that the price of a perpetual American option on an underlying asset
whose value can be characterized as a stochastic process coincides
with the value of an optimal stopping problem for this process (see,
for example, \citet{KarandShr} pp. 54--87 and \citet{oksendal}, pp.
290--298). Such option prices, while naturally of interest in
themselves, can also be used as upper bounds for prices of American
options with finite expiration dates. Thus, their role is of
importance from a risk management point of view as well. Perpetual
optimal stopping problems arise quite naturally  also in the real
options literature on the valuation of irreversible investment
opportunities (see \citet{DixitandPindyck} for an extensive textbook
treatment of this theory, and \citet{boyarchenkolevendorskii2} for
some more recent developments in this field). In that modeling
framework the investment decision is usually interpreted as an
opportunity but not an obligation to obtain a stochastically
fluctuating return in exchange from a payment (sunk cost) which may
or may not be stochastic as well. Given the considerable planning
horizon of the valuation of real investment opportunities, the time
horizon is typically assumed to be infinite, i.e. the considered
optimal timing problem of the investment opportunity is assumed to
be perpetual.

When the dynamics of the underlying process are characterizable via
an Itô stochastic differential equation of form
\begin{eqnarray}
d X_{t}=\alpha(X_{t}) dt+\sigma(X_{t}) d W_{t}
\end{eqnarray}
with $W$ a standard Wiener process, the stopping problem has been
widely studied by relying on various techniques. Perhaps the most
common approach is to rely on variational inequalities or the
classical Hamilton-Jacobi-Bellman approach due to its applicability
in a multidimensional setting as well (cf. \citet{oksendal} and
\citet{oksendaletreikvam}). In the one-dimensional setting there
are, however, several different techniques for analyzing the
perpetual stopping problem. The most general approach is probably
provided by studies relying on the integral characterization of
excessive functions for diffusion processes and the Martin boundary
theory (cf. \citet{salminen} and \citet{BorandSalm}, pp. 32--35).
Alternatively, the considered problem can be analyzed by relying on techniques based on
martingales like the {\em Snell envelope} (see \citet{peskiretshiryaev} for a thorough characterization and comprehensive list of references), or the {\em Beibel-Lerche}
approach (see, for example, \citet{BeLe1} and \citet{LeUr}). One dimensional stopping problems have been also analyzed by exploiting
the relationship between functional concavity and $r$-excessivity
along the lines of the pioneering work by \citet{Dynkin} (Chapters
XV and XVI) and by \citet{DynkinandYus} which has been subsequently
applied within a general optimal stopping framework in
\citet{dayanikkaratzas}. An approach based on occupation measures
and infinite dimensional linear programming is, in turn, developed in
\citet{ChSto} and \citet{HeSto}. Another technique for studying the
perpetual optimal stopping problem in the linear diffusion setting
is provided by the approaches relying on the well-known relationship
between excessivity and superharmonicity with respect to first exit
times from open sets with compact closure in the state space of the
considered diffusion (cf. \citet{Dynkin}, Theorem 12.4). In this
case, the optimal stopping problem is reduced to the optimization of
arbitrary boundaries and can be analyzed by relying on
ordinary nonlinear programming techniques (cf. \citet{alvarez1} and
\citet{alvarez2}).

More recently, the shortcomings of continuous path models driven by
a Brownian motion have become apparent and, consequently, more
general models allowing path discontinuities have been studied. In
many ways the most simple generalizations of the traditional
continuous path models are jump diffusion models, in which the
driving noise is a L\'{e}vy process. L\'{e}vy processes can be used
to construct more realistic models of financial quantities, as they
are able to generate jump discontinuities and the leptokurtic
feature of return distributions, unlike the Gaussian models based on
a Brownian motion and the normal distribution. For a taste of the
plentiful research done on pricing American options and optimal
stopping in L\'{e}vy models, see (for example) \citet{alili}, \citet{boyarchenkolevendorskii},
\citet{boyarchenkolevendorskii3}, \citet{duffie}, \citet{gerber1},
\citet{gerber2},  \citet{Kou}, \citet{mordecki1},
\citet{mordecki2}
\citet{mordeckietsalminen}, and  \citet{pham}.

In risk management a criticism often leveled against the continuous
models is their inability to model downside risk: the possibility of
an instantaneous drop in the value of an asset. In real life markets
phenomena closely resembling such instantaneous drops are often
observed (for example, sudden unanticipated deterioration of stock
market values, credit defaults, etc.). An empirically observed fact
is that in the stock market reactions to negative shocks are usually
significantly stronger than the reactions to positive ones (this is
the celebrated "bad news" principle originally introduced in the
seminal study by \citet{bernanke}). In light of this asymmetric
nature of the reaction to unanticipated shocks, a prudent approach
is to disregard possibilities for positive surprises and to take
fully into account the possibilities for disadvantageous future
occurrences. Consequently, a one-sided model that allows
instantaneous downward jumps can be seen as an acceptable model from
a prudent risk management point of view.

Motivated by our previous arguments, it is our objective in this
study to consider a spectrally negative one-dimensional jump
diffusion, say $X$, with a state space $\mathcal{I}=(a,b)\subseteq \mathbb{R}$
and unattainable boundaries $a$ and $b$. Interestingly, we establish that
given some extra conditions on $X$, the value of the optimal
stopping problem has a representation in terms of an ordinary
nonlinear programming problem (cf. \citet{alvarez1} and
\citet{alvarez4} for an associated result in the continuous
diffusion case).

In order to develop relatively easily verifiable sufficient conditions we consider an optimal stopping problem
of an associated continuous diffusion process which can be obtained
by removing the pure jump part of the considered L\'{e}vy diffusion.
We demonstrate that the value of the considered jump-diffusion
stopping problem can be "sandwiched" between the values of two
stopping problems which are defined with respect to the associated
continuous diffusion. This finding is of interest since it can be
applied for deriving bounds for the exercise threshold of the
considered optimal stopping problem for the underlying
jump-diffusion. Moreover, since the restricting values defined with
respect to the continuous diffusion differ only by the rate at which
they are discounted, our findings indicate that under some
circumstances the downside jump-risk can be directly incorporated
into the continuous diffusion case by adjusting the discount rate
appropriately (for some results in this direction, see
\citet{AlvarezRakkolainen2010}). This characterization is also
important in the analysis of the impact of downside risk on the
optimal stopping policy since according to this representation the
optimal exercise boundary is lower for the underlying jump-diffusion
than for the associated dominating continuous diffusion process
provided that both valuations are discounted at the same rate.

We also investigate the comparative static properties of the optimal
stopping policy and its value and present a set of relatively
general conditions under which the value of the considered problem
is convex. Along the lines of previous studies
considering the optimal stopping of linear diffusions, we find that
in such a case higher volatility increases the value of the optimal
strategy and expands the continuation region where stopping is
suboptimal by increasing the optimal exercise threshold. These
observations are of interest since they indicate that higher
volatility decelerates the rational exercise of investment
opportunities by increasing the option value of waiting in the
presence of jumps as well. Interestingly, our results also indicate that
higher jump-intensity increases the value
of waiting and decelerates rational exercise by expanding the
continuation region. These observations emphasize the potentially
significant combined negative effect of jump-risk and continuous
systematic risk on the timing of irreversible investment policies.

We also analyze if the considered class of stopping problems can be
expressed in terms of an associated deterministic timing problem. Somewhat surprisingly, we
find that the value of the considered optimal stopping problem coincides with the value of an
associated optimal timing problem of a deterministic process evolving according to the dynamics characterized by an
ordinary first order differential equation adjusted
to the risk generated by the driving Lévy process. We show that this representation is valid for single threshold policies
within the considered class of stopping problems. Since an analogous certainty equivalent formulation is
valid in the continuous diffusion setting as well, our findings can be utilized in decomposing the required exercise premium
into two parts: a part based on the continuous Brownian dynamics and a part based on the discontinuous compensated compound Poisson process.

The contents of this study are as follows. In section 2, we present
the model and the basic assumptions used throughout the study. The auxiliary results based
on the associated continuous diffusion are then developed in section 3. The
representation of the stopping problem in terms of an ordinary
optimization problem is then stated and proved in section 4. The sensitivity of the optimal exercise policy and its value with respect to changes in volatility and the jump intensity of the driving dynamics are then investigated in section 5. Section 6 in turn summarizes our results on the certainty equivalent valuation principles. Explicit illustrations
are given in section 7, and section 8 finally concludes our study.

\section{The Setup and Basic Assumptions}

Let $(\Omega,\mathcal{F},\mathbb{P})$ be a probability space
carrying a standard Wiener process $W=\{W_{t}\}$ and a compound
Poisson process $J=\{J_{t}\}$ with intensity $\lambda$ and a jump
size distribution on $\mathcal{S}\subseteq \mathbb{R}$ characterized
by a probability distribution $\mathfrak{m}$. We define a L\'{e}vy
process $L=\{L_{t}\}$ by
\begin{eqnarray} \label{levyprocess}
L_{t}=  t + W_{t} + J_{t}.
\end{eqnarray}
We equip $(\Omega,\mathcal{F},\mathbb{P})$ with the completed
natural filtration $\mathbb{F}$ generated by this process. The
natural filtration of a L\'{e}vy process is right-continuous, and
thus the completed filtration satisfies the usual hypotheses (see
\citet{protter} Theorem I.31).

Given the driving L\'{e}vy
process, let $X=\{X_{t}\}$ be a jump diffusion evolving according to the dynamics characterized by the stochastic
differential equation
\begin{eqnarray} \label{jumpdiffusion}
d X_{t}= \alpha(X_{t-}) dt + \sigma(X_{t-}) d W_{t} +
\int_{\mathcal{S}} \gamma(X_{t-},z) \tilde{N}(dz,dt),\quad X_{0}=x\in \mathcal{I},
\end{eqnarray}
where $\tilde{N}(U,t)$ is a compensated Poisson
random measure with characteristic (L\'{e}vy) measure
$\nu(dz)=\lambda  \mathfrak{m}(dz)$ and $\mathfrak{m}$ is a
probability measure on $\mathcal{S}$. Note that the driving jump
process is, as a compensated process, a martingale -- this is no
restriction, as non-martingale jump dynamics can be reduced to the
form \eqref{jumpdiffusion} by adding and subtracting a correction term
on the left side of the stochastic differential equation. We denote
the expectation of the jump size by $\overline{m}$. The state space
of the L\'{e}vy diffusion is an open interval $\mathcal{I}:=(a,b)\subseteq
\mathbb{R}$ where $a$ and $b$ are unattainable boundaries (not attainable
in finite time). We assume that the coefficient functions in
\eqref{jumpdiffusion} satisfy some sufficient conditions for the
existence of a unique adapted c\`{a}dl\`{a}g solution $X \in
L^{2}(\mathbb{P})$ without explosions; in the case of an infinite
interval $\mathcal{I}$, the usual sufficient conditions are at most linear
growth and Lipschitz continuity, see \citet{oksendaletsulem} Theorem
1.19. The global Lipschitz condition guarantees that the explosion
time of the process is a.s. infinite (see \citet{protter} Theorem
V.40) and that $X$ is strong Markov (cf. \citet{protter} Theorem
V.32). Observe also that the jump times of $X$ coincide with the
jump times of the driving L\'{e}vy process, which are totally
inaccessible stopping times (cf. \citet{protter} Theorem III.4).
This implies that $X$ is quasi-left continuous (or left-continuous
over stopping times) and hence a Hunt process. To avoid
degeneracies, we also assume that the volatility coefficient satisfies the inequality
$\sigma(x) >0$ on $\mathcal{I}$. Finally, we assume that
$a-x < \gamma(x,z)\leq 0$ for all $(x,z) \in \mathcal{I} \times
\mathcal{S}$. This assumption guarantees that $X$ has only negative jumps and
cannot reach the lower boundary $a$ by jumping.

Our objective is to consider the optimal stopping
problem
\begin{eqnarray} \label{stoppingproblem}
V_\lambda(x)=\sup_{\tau \in \mathcal{T}} \mathbb{E}_{x}\left[ e^{-r \tau}g(X_{\tau})\right],
\end{eqnarray}
where  $\mathcal{T}$ denotes the set
of all $\mathbb{F}$ -stopping times and $g:\mathcal{I}\mapsto \mathbb{R}$ denotes the exercise payoff.
We assume that the exercise payoff is continuous and nondecreasing on $\mathcal{I}$ and that there is a unique break even state $g^{-1}(0)=x_0\in \mathcal{I}$ such that $g(x)>0$ for all $x\in (x_0,b)$. We also assume that $g\in C^1(\mathcal{I}\backslash \mathcal{N})\cap C^2(\mathcal{I}\backslash \mathcal{N})$, $g'(x\pm) < \infty$, and $|g''(x\pm)|<\infty$ for all $x\in \mathcal{N}$, where $\mathcal{N}$ is a finite set of points in $\mathcal{I}$.

The solution of the optimal stopping problem is known to be closely
related to the integro-differential equation (cf. \citet{oksendaletsulem}) defined for $f \in
C_{0}^{2}(\mathcal{I})$ by
\begin{eqnarray} \label{integrode1}
\mathcal{G}f = r f,
\end{eqnarray}
where $(\mathcal{G}f)(x)$ is the generator of $X$ given by
\begin{eqnarray} \label{generator}
\frac{1}{2} \sigma^{2}(x) f^{\prime
\prime}(x)+\alpha(x)f^{\prime}(x)+\lambda
\int_{\mathcal{S}}\left\{f(x+\gamma(x,z))-f(x)-f^{\prime}(x)\gamma(x,z)\right\}\mathfrak{m}(dz).
\end{eqnarray}
Integrating the last two terms of the integrand in \eqref{generator}
and using the notation $(\mathcal{G}_{r}):=(\mathcal{G}-r)$ we can
write \eqref{integrode1} equivalently as
\begin{eqnarray*}
(\mathcal{G}_{r} f)(x)=\frac{1}{2} \sigma^{2}(x) f''(x) + \tilde{\alpha}(x)f'(x)-(r+\lambda)f(x)+\lambda
\int_{\mathcal{S}}f(x+\gamma(x,z))\mathfrak{m}(dz)=0,
\end{eqnarray*}
where $\tilde{\alpha}(x)=\alpha(x)-\lambda \int_{\mathcal{S}}
\gamma(x,z)\mathfrak{m}(dz)$. We need to make the following assumption
on the integro-differential operator $\mathcal{G}_{r}$:
\begin{description}
\item[(A1)] The integro-differential equation $\mathcal{G}_{r} \psi_\lambda = 0$ has an increasing solution $\psi_\lambda \in \mathcal{C}^{2}(\mathcal{I})$.
\end{description}
\vskip .1in
It should be mentioned here that it is not at all clear whether a
given integro-differential equation has such a smooth solution --
the validity of this assumption needs to be checked in each
case.

Having stated our assumption on the existence of a positive solution of the integro-differential equation
$\mathcal{G}_{r} \psi_\lambda = 0$ we now can state the following auxiliary lemma.
\begin{lemma} \label{exittime}
Assume that condition {\rm (}{\bf A1}{\rm )} is met and denote as $\tau_{(a,y)}=\inf\{t\geq 0: X_t\not \in (a,y)\}$ the first exit time of the underlying jump diffusion from the set $(a,y)$, where $y\in \mathcal{I}$. Then for all $x < y$ we have
\begin{eqnarray} \label{Laplace}
\mathbb{E}_{x}[e^{-r \tau_{(a,y)}}]=\frac{\psi_\lambda(x)}{\psi_\lambda(y)}
\end{eqnarray}
and $\psi$ is increasing. Moreover, in case $\psi_\lambda(x)$ exists any
other nonnegative and increasing solution of $\mathcal{G}_{r}u=0$ is
a constant multiple of $\psi_\lambda(x)$ (i.e. $\psi_\lambda(x)$ is unique up to a
multiplicative constant).
\end{lemma}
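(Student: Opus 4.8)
The plan is to turn assumption (A1) into a local martingale and then pin down the exit location using the one-sidedness of the jumps. First I would observe that since $\psi_\lambda \in C^{2}(\mathcal{I})$ solves $\mathcal{G}_{r}\psi_\lambda = 0$, the It\^o formula for jump diffusions (cf. \citet{oksendaletsulem}) applied to $e^{-rt}\psi_\lambda(X_t)$ produces a finite-variation part equal to $\int_0^t e^{-rs}(\mathcal{G}_{r}\psi_\lambda)(X_{s-})\,ds = 0$, so that $M_t := e^{-rt}\psi_\lambda(X_t)$ is a local martingale. Stopping at $\tau_{(a,y)}$ confines the process to $(a,y]$, where $\psi_\lambda$, being positive and increasing, satisfies $0 < \psi_\lambda \leq \psi_\lambda(y)$; hence $M_{t\wedge\tau_{(a,y)}}$ is a bounded local martingale and therefore a genuine martingale.

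The conceptual crux is to identify $X_{\tau_{(a,y)}}$. Here the structure of $X$ enters decisively: a jump from a point $x\in(a,y)$ lands at $x+\gamma(x,z)$, and the standing assumption $a-x<\gamma(x,z)\leq 0$ forces $x+\gamma(x,z)\in(a,y)$, so \emph{jumps never cause exit from the interval} — they can neither push the process above $y$ (spectral negativity rules out upward jumps) nor down to or below $a$. Consequently exit is governed entirely by the continuous part of the dynamics; since $a$ is unattainable in finite time, the only possible finite-time exit is by continuous upcrossing at $y$, whence $X_{\tau_{(a,y)}}=y$ almost surely on $\{\tau_{(a,y)}<\infty\}$. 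This step is where the specific hypotheses on $\gamma$ and on the unattainability of $a$ are essential, and I expect the careful justification that no downward-jump or lower-boundary exit can occur to be the main point requiring attention.

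With the martingale property and the exit location in hand, I would apply optional sampling at $\tau_{(a,y)}\wedge T$ to obtain $\psi_\lambda(x)=\mathbb{E}_{x}[e^{-r(\tau_{(a,y)}\wedge T)}\psi_\lambda(X_{\tau_{(a,y)}\wedge T})]$ and then let $T\to\infty$. On $\{\tau_{(a,y)}=\infty\}$ the integrand is dominated by $e^{-rT}\psi_\lambda(y)\to 0$, while on $\{\tau_{(a,y)}<\infty\}$ it converges to $e^{-r\tau_{(a,y)}}\psi_\lambda(y)$; since everything is dominated by the constant $\psi_\lambda(y)$, dominated convergence yields $\psi_\lambda(x)=\psi_\lambda(y)\,\mathbb{E}_{x}[e^{-r\tau_{(a,y)}}]$, which is precisely \eqref{Laplace}. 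Monotonicity of $\psi_\lambda$ is part of (A1) and is consistent with the right-hand side being increasing in $x$.

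For uniqueness I would run the identical argument with an arbitrary nonnegative increasing solution $u$ of $\mathcal{G}_{r}u=0$ in place of $\psi_\lambda$: boundedness of $u$ on $(a,y]$ again makes the stopped process $e^{-rt}u(X_t)$ a bounded martingale, and the same exit analysis gives $u(x)=u(y)\,\mathbb{E}_{x}[e^{-r\tau_{(a,y)}}]=(u(y)/\psi_\lambda(y))\,\psi_\lambda(x)$ for every $x<y$. Comparing two points below a common upper level shows the ratio $u(x)/\psi_\lambda(x)$ is independent of $x$, so $u\equiv C\psi_\lambda$ for a constant $C$, establishing uniqueness up to a multiplicative constant.
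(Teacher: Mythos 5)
Your proposal is correct and follows essentially the same route as the paper's proof: the paper applies Dynkin's formula to $e^{-rt}\psi_\lambda(x)$, uses $\mathcal{G}_r\psi_\lambda=0$ together with the observation that $X_{\tau_{(a,y)}}=y$ a.s.\ (no positive jumps, $a$ unattainable) to obtain \eqref{Laplace}, and repeats the argument for any other nonnegative increasing solution to get uniqueness. Your version merely spells out the localization, boundedness, and $T\to\infty$ limit that Dynkin's formula compresses.
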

\begin{proof}
Applying Dynkin's formula to the mapping
$(t,x)\mapsto e^{-rt}\psi_\lambda(x)$ yields
\begin{eqnarray*}
\mathbb{E}_{x}[e^{-r \tau_{(a,y)}}
\psi_\lambda(X_{\tau_{(a,y)}})]=\psi_\lambda(x)+\mathbb{E}_{x}\int_{0}^{\tau_{(a,y)}}e^{-rt}(\mathcal{G}_{r}\psi_\lambda)(X_{t})dt.
\end{eqnarray*}
Since $\psi_\lambda$ solves $\mathcal{G}_{r}\psi_\lambda=0$ and $X_{\tau_{(a,y)}}=y$
a.s. (because $X$ has no positive jumps and it never attains $a$),
this implies that
\begin{eqnarray*}
\psi_\lambda(y) \mathbb{E}_{x}[e^{-r \tau_{(a,y)}}]=\psi_\lambda(x),
\end{eqnarray*}
from which the first two of the claimed results follow (for the
latter one, note that $\mathbb{E}_{x}[e^{-r \tau_{(a,y)}}] \in
(0,1]$). To establish uniqueness, assume that $\varsigma:\mathcal{I}\mapsto
\mathbb{R}_+$ is another increasing and nonnegative solution of
equation $\mathcal{G}_{r}u=0$. By applying a similar argument as
above, we find that
$$
\varsigma(x) = \frac{\varsigma(y)}{\psi_\lambda(y)}\psi_\lambda(x)
$$
which completes the proof of our lemma.
\end{proof}

It is worth emphasizing that the strong Markov property of the jump
diffusion and the fact that it can increase only continuously imply that the function
$\mathbb{E}_{x}[e^{-r \tau_{(a,y)}}]$ can always be expressed as a
ratio of the form \eqref{Laplace}. However, it is not beforehand
clear whether this ratio is always (i.e. for any jump diffusion
model) twice continuously differentiable with respect to the current
state or not. Hence, lemma \ref{exittime} essentially demonstrates
that in those cases where the integro-differential equation
$\mathcal{G}_{r}u=0$ has an increasing solution, the expected value
$\mathbb{E}_{x}[e^{-r \tau_{(a,y)}}]$ can be expressed in terms of
this solution and identity \eqref{Laplace} holds.

It is also worth noticing that the function $\psi_\lambda(x)$ is related to the more general class of functions known as the $r$-scale
functions familiar from the literature on Lévy processes (for an excellent treatment, see Chapter 8 in \citet{kyprianou}).
As known from that literature, the Laplace transform of the first exit time from the set $(a,y)\subset \mathcal{I}$ can always be expressed as
\begin{eqnarray}
\mathbb{E}_{x}[e^{-r \tau_{(a,y)}}] = \frac{F^{(r)}(x)}{F^{(r)}(y)},\label{rscale}
\end{eqnarray}
where $F^{(r)}(x)$ denotes the continuously differentiable  $r$-scale function associated to the underlying spectrally negative jump diffusion (cf. Theorem 8.1 in \citet{kyprianou}). Hence, Lemma \ref{exittime} essentially states that if a function $\psi_\lambda(x)$ satisfying assumption ({\bf A1}) exists, it has to coincide with the $r$-scale function $F^{(r)}(x)$.

Finally, we wish to point out here that in \citet{KouWang2003},
representation results similar to our lemma \ref{exittime} are
obtained for a Brownian motion augmented with a compound Poisson
process with double exponentially distributed jumps.

\section{Sandwiching the Solution}

In this section we plan to develop auxiliary inequalities based on
two optimal stopping problems of an associated
continuous diffusion model. To accomplish this task, consider now
the associated diffusion
\begin{eqnarray} \label{associateddiffusion}
d\tilde{X}_{t}:=
\tilde{\alpha}(\tilde{X}_{t})dt+\sigma(\tilde{X}_{t})dW_{t},\quad \tilde{X}_0=x.
\end{eqnarray}
Given the associated diffusion process $\tilde{X}$ we now introduce the associated stopping problem
\begin{eqnarray} \label{associatedprob}
\tilde{V}_\theta(x)=\sup_{\tilde{\tau}} \mathbb{E}_{x}\left[ e^{-\theta \tilde{\tau}}
g(\tilde{X}_{\tilde{\tau}})\right]
\end{eqnarray}
and denote as $\tilde{C}_\theta=\{x\in \mathcal{I}:\tilde{V}_\theta(x)>g(x)\}$ the continuation region and as
$\tilde{\Gamma}_\theta=\{x\in \mathcal{I}:\tilde{V}_\theta(x)=g(x)\}$ the stopping region associated to \eqref{associatedprob}.
It is worth mentioning that the associated diffusion is very useful
in assessing the impact of downside risk on the optimal policy, as
the Lévy diffusion $X$ is, in fact, a superposition of $\tilde{X}$
and a spectrally negative, nondecreasing, non-martingale jump process. In accordance with our assumptions concerning the boundary behavior of the
jump diffusion $X$, we now assume that the boundaries of the state space of $\tilde{X}$ are natural.

As usually,
we denote as $\tilde{\mathcal{A}}_\theta$ the differential operator
\begin{eqnarray*}
\tilde{\mathcal{A}}_\theta =
\frac{1}{2}\sigma^{2}(x)\frac{d^2}{dx^2} +
\tilde{\alpha}(x)\frac{d}{dx}
-\theta
\end{eqnarray*}
associated with the continuous diffusion $\tilde{X}_t$ killed at the
constant rate $\theta>0$ and as
$$
S'(x) = \exp\left(-\int\frac{2\tilde{\alpha}(x)dx}{\sigma^2(x)}\right)
$$
the density of the scale function of the diffusion
$\tilde{X}_t$.

Along the lines of the notation in our
previous analysis, we denote as $\tilde{\psi}_\theta(x)$ the
increasing fundamental solution of the ordinary linear second order
differential equation $(\tilde{\mathcal{A}}_\theta u)(x)=0$ (for a
thorough characterization of these mappings and their boundary behavior, see
\citet{BorandSalm}, pp. 18--19). As is
well-known from the classical theory of diffusions, given this
increasing fundamental solution we have for all $x\leq y$ (cf.
\citet{BorandSalm}, p. 18)
$$
\mathbb{E}_{x}\left[e^{-\theta \tilde{\tau}_{(a,y)}}\right] =
\frac{\tilde{\psi}_{\theta}(x)}{\tilde{\psi}_{\theta}(y)},
$$
where  $\tilde{\tau}_{(a,y)}=\inf\{t\geq 0: \tilde{X}_t \not\in (a,y)\}$
denotes the first exit time of the diffusion $\tilde{X}_t$ from the
set $(a,y)$. Therefore, the continuity of the exercise payoff yields
that for all $x\leq y$ we have
$$
\mathbb{E}_{x}\left[e^{-\theta
\tilde{\tau}_{(a,y)}}g(\tilde{X}_{\tilde{\tau}_{(a,y)}})\right] =
g(y)\frac{\tilde{\psi}_{\theta}(x)}{\tilde{\psi}_{\theta}(y)}
$$
implying that
$$
\sup_{y\geq x}\mathbb{E}_{x}\left[e^{-\theta
\tilde{\tau}_{(a,y)}}g(\tilde{X}_{\tilde{\tau}_{(a,y)}})\right] =
\tilde{\psi}_{\theta}(x)\sup_{y\geq
x}\left[\frac{g(y)}{\tilde{\psi}_{\theta}(y)}\right]
$$
provided that the supremum exists.

Before proceeding in our analysis, we first establish the following result
establishing that in the present setting the value $V_\lambda(x)$ can be sandwiched between the values
of two
associated stopping problems of the associated diffusion $\tilde{X}$.
\begin{lemma}\label{domrexcessive}
Assume that the function $f:\mathcal{I}\mapsto \mathbb{R}_+$ is non-decreasing.
\begin{enumerate}
  \item[(A)] if $f$ is $r$-excessive for the the diffusion $\tilde{X}_t$ then
it is $r$-excessive for the jump diffusion $X_t$ as well.
  \item[(B)] if $f$ is $r$-excessive for the the jump diffusion $X_t$ then
it is $r+\lambda$-excessive for the diffusion $\tilde{X}_t$ as well.
\end{enumerate}
Therefore,
$\tilde{V}_{r+\lambda}(x)\leq V_\lambda(x) \leq \tilde{V}_{r}(x)$
for all $x\in \mathcal{I}$, $\tilde{C}_{r+\lambda}\subseteq \{x\in \mathcal{I}:V_\lambda(x)>g(x)\}\subseteq \tilde{C}_{r}$, and
$\tilde{\Gamma}_{r}\subseteq  \{x\in \mathcal{I}:V_\lambda(x)=g(x)\} \subseteq \tilde{\Gamma}_{r+\lambda}$
\end{lemma}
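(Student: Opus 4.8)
The plan is to compare the three operators $\mathcal{G}_r$, $\tilde{\mathcal{A}}_r$ and $\tilde{\mathcal{A}}_{r+\lambda}$ acting on a common non-decreasing $f$, to turn the resulting pointwise inequalities into statements about excessivity, and finally to lift these to the value functions via the least-excessive-majorant characterization. The algebraic heart of the matter is the identity, valid for $f\in C^2(\mathcal{I})$,
\begin{eqnarray*}
(\mathcal{G}_{r} f)(x) &=& (\tilde{\mathcal{A}}_{r} f)(x) + \lambda \int_{\mathcal{S}} \left[ f(x+\gamma(x,z)) - f(x)\right] \mathfrak{m}(dz) \\
&=& (\tilde{\mathcal{A}}_{r+\lambda} f)(x) + \lambda \int_{\mathcal{S}} f(x+\gamma(x,z)) \mathfrak{m}(dz).
\end{eqnarray*}
Because $a-x<\gamma(x,z)\leq 0$, the argument $x+\gamma(x,z)$ lies in $(a,x]\subset\mathcal{I}$, so both integrals are well defined; since $f$ is non-decreasing the bracketed term in the first line is non-positive, while since $f\geq 0$ the integral in the second line is non-negative. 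Hence for smooth non-decreasing $f$ one obtains the sandwich $(\tilde{\mathcal{A}}_{r+\lambda} f)(x)\leq (\mathcal{G}_{r} f)(x)\leq (\tilde{\mathcal{A}}_{r} f)(x)$. Part (A) is then immediate: $r$-excessivity of $f$ for $\tilde{X}$ means $\tilde{\mathcal{A}}_{r}f\leq 0$, whence $\mathcal{G}_{r}f\leq 0$, i.e. $f$ is $r$-excessive for $X$. Part (B) follows symmetrically: $r$-excessivity of $f$ for $X$ gives $\mathcal{G}_{r}f\leq 0$, whence $\tilde{\mathcal{A}}_{r+\lambda}f\leq 0$, i.e. $f$ is $(r+\lambda)$-excessive for $\tilde{X}$.

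The main obstacle is that excessive functions need not be twice continuously differentiable, so the generator computation above must be justified for a \emph{general} non-decreasing $r$-excessive $f$. I would handle this through the supermartingale characterization of excessivity combined with an approximation by smooth non-decreasing functions generated by the resolvent of the relevant killed process, which increase pointwise to $f$; one verifies the inequalities for these smooth approximants and passes to the limit. For part (A) there is in addition a transparent probabilistic route worth recording: rewriting \eqref{jumpdiffusion} in terms of the non-compensated jump measure shows that between jumps $X$ obeys exactly the dynamics of $\tilde{X}$ (drift $\tilde{\alpha}$, diffusion $\sigma$), while every jump $\gamma(X_{t-},z)\leq 0$ is downward. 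A pathwise comparison under a common driving Wiener process then yields $X_t\leq \tilde{X}_t$, and monotonicity of $f$ gives $\mathbb{E}_{x}[e^{-rt}f(X_t)]\leq \mathbb{E}_{x}[e^{-rt}f(\tilde{X}_t)]\leq f(x)$, which together with the near-continuity inherited from the diffusion case is precisely $r$-excessivity of $f$ for $X$.

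Finally I would deduce the ordering of the values from the classical fact that each value is the smallest excessive majorant of $g$ for the corresponding process and discount rate. Since $g$ is non-decreasing, the same coupling of ordered initial states shows that $V_\lambda$, $\tilde{V}_r$ and $\tilde{V}_{r+\lambda}$ are themselves non-decreasing, so parts (A) and (B) apply to them. By (A), $\tilde{V}_r$ is a non-decreasing $r$-excessive majorant of $g$ for $X$, and as $V_\lambda$ is the least such majorant we get $V_\lambda\leq \tilde{V}_r$; by (B), $V_\lambda$ is a non-decreasing $(r+\lambda)$-excessive majorant of $g$ for $\tilde{X}$, and as $\tilde{V}_{r+\lambda}$ is the least such we get $\tilde{V}_{r+\lambda}\leq V_\lambda$. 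The region inclusions are then purely set-theoretic consequences of $g\leq \tilde{V}_{r+\lambda}\leq V_\lambda\leq \tilde{V}_r$ together with the fact that each continuation set is the locus where the corresponding value strictly exceeds $g$ and each stopping set is its complement in $\mathcal{I}$: for instance $x\in\tilde{C}_{r+\lambda}$ forces $g(x)<\tilde{V}_{r+\lambda}(x)\leq V_\lambda(x)$, and the stopping-set inclusions follow by taking complements.
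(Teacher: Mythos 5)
Your argument reaches the right conclusions, and your value-ordering step (least excessive majorant plus monotonicity of the values, then set-theoretic inclusions) is essentially identical to the paper's. But your core route to (A) and (B) — the pointwise operator sandwich $(\tilde{\mathcal{A}}_{r+\lambda} f)(x)\leq (\mathcal{G}_{r} f)(x)\leq (\tilde{\mathcal{A}}_{r} f)(x)$ for smooth non-decreasing $f\geq 0$ — differs from the paper, which argues purely probabilistically: for (A) it uses the pathwise domination $X_t\leq \tilde{X}_t$ together with monotonicity of $f$ (exactly the ``transparent probabilistic route'' you record as an aside, so that part of your proposal is complete and matches the paper); for (B) it conditions on the first jump time $T_\lambda$, writing $\mathbb{E}_x[e^{-rt}f(X_t)]=\mathbb{E}_x[e^{-(r+\lambda)t}f(\tilde{X}_t)]+\mathbb{E}_x[e^{-rt}f(X_t)\mathbf{1}_{\{t\geq T_\lambda\}}]$ and discarding the nonnegative remainder. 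That argument needs nothing beyond $X_t=\tilde{X}_t$ on $\{t<T_\lambda\}$, independence of $T_\lambda$ from $W$, and $f\geq 0$.

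The place where your proposal has a genuine gap is part (B). Your operator inequality requires $f\in C^2$, and you propose to close the regularity gap by resolvent smoothing. For the diffusion $\tilde{X}$ this is standard (one-dimensional elliptic regularity gives smooth resolvents), but for (B) you must smooth an $r$-excessive function of the \emph{jump diffusion} $X$ into $C^2$ functions satisfying $\mathcal{G}_r f_\beta\leq 0$ pointwise — i.e. you need regularity of the resolvent of the integro-differential operator $\mathcal{G}_r$. This is precisely the delicate issue the paper flags in its discussion of assumption ({\bf A1}) (``it is not at all clear whether a given integro-differential equation has such a smooth solution''), so the step cannot be waved through as routine; as written, (B) is not fully proved. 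The fix is easy, though: replace the generator argument for (B) by the first-jump-time decomposition above, which avoids any smoothness of $f$ altogether. Your generator sandwich remains a useful heuristic and, for smooth functions, cleanly explains the ordering of the variational inequalities that the paper remarks on after the lemma.
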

\begin{proof}
(A) It is clear by the definition of the jump diffusion $X_t$ and the associated diffusion $\tilde{X}_t$ that $X_t\leq \tilde{X}_t$ a.s.
Assume now that the function $f:\mathcal{I}\mapsto \mathbb{R}_+$ is non-decreasing and $r$-excessive for the the diffusion $\tilde{X}_t$.
We then have that $f(x)\geq \mathbb{E}_x[e^{-r t}f(\tilde{X}_t)] \geq \mathbb{E}_x[e^{-r t}f(X_t)]$ for all $(t,x)\in \mathbb{R}_+\times \mathcal{I}$ demonstrating that $f$ is $r$-excessive for $X$ as well.

(B) Consider now the second inequality and denote as $T_\lambda$ the first exponentially distributed date at which the driving compound Poisson process experiences a jump. If $f$ is $r$-excessive for $X_t$ then we have for all $(t,x)\in \mathbb{R}_+\times \mathcal{I}$
\begin{eqnarray*}
f(x)\geq \mathbb{E}_x[e^{-rt}f(X_t)] &=& \mathbb{E}_x[e^{-rt}f(\tilde{X}_t)\mathbf{1}_{\{t<T_\lambda\}} + e^{-rt}f(X_t)\mathbf{1}_{\{t\geq T_\lambda\}}]\\
&=& \mathbb{E}_x[e^{-(r+\lambda)t}f(\tilde{X}_t)] + \mathbb{E}_x[e^{-rt}f(X_t)\mathbf{1}_{\{t\geq T_\lambda\}}]\\
&\geq& \mathbb{E}_x[e^{-(r+\lambda)t}f(\tilde{X}_t)]
\end{eqnarray*}
since $T_\lambda$ is independent of the driving Brownian motion, $X_t=\tilde{X}_t$ for $t<T_\lambda$, and $f(x)$ is nonnegative. Hence, we find that $f(x)$ is $r+\lambda$-excessive for
$\tilde{X}_t$ as well.

It remains to consider the ordering of the values of the considered stopping problems. We first observe that the assumed monotonicity of the exercise payoff implies that the value of the optimal stopping strategy is monotonic as well. However, since the value $V_\lambda(x)$ constitutes the least $r$-excessive majorant of the payoff $g(x)$ for the jump diffusion $X$ and $\tilde{V}_\theta(x)$ constitutes the least $\theta$-excessive majorant of the payoff $g(x)$ for the diffusion $\tilde{X}$ the alleged ordering follows from part (A) and (B). Our results on the continuation regions and stopping regions of the considered stopping problems are now straightforward implications of the proven ordering.
\end{proof}

Lemma \ref{domrexcessive} demonstrates that the value of the optimal stopping problem of the
jump diffusion can be sandwiched between the values of two associated stopping problems of
the associated continuous diffusion process $\tilde{X}$.  More precisely, Lemma \ref{domrexcessive} proves that
$\tilde{V}_{r+\lambda}(x)\leq V_\lambda(x) \leq \tilde{V}_r(x)$ for all $x\in \mathcal{I}$.
This observation is
interesting since it directly generates a natural ordering for the
monotone and smooth solutions of the variational inequalities
$\max\{(\mathcal{G}_ru)(x), g(x)-u(x)\}=0$ and
$\max\{(\mathcal{\tilde{A}}_{\theta}u)(x),g(x)-u(x)\}=0$ with
$\theta=r, r+\lambda$.

In light of the observation of Lemma \ref{domrexcessive} it
is naturally of interest to ask whether the discount rate $\theta$
can be chosen so as to extend the findings of  Lemma \ref{domrexcessive} to the expected present values of a unit of account at exercise.
 A set of results indicating that such ordering holds for a class of nondecreasing
cases considered in this study are summarized in our next lemma.
\begin{lemma}\label{domination}
Denote as $\tilde{M}_t=\sup\{\tilde{X}_s; s\leq t\}$ and as $M_t=\sup\{X_s; s\leq t\}$ the running maximum processes of $\tilde{X}$ and $X$, respectively, and let $T_\theta\sim\exp(\theta)$ be an exponentially distributed random date independent of $X$ and $\tilde{X}$. We then have
\begin{eqnarray}
\tilde{M}_{T_{r}\wedge T_{\lambda}}\leq M_{T_r}\leq \tilde{M}_{T_r},\quad \textrm{a.s.}\label{supineq}
\end{eqnarray}
Moreover,
\begin{eqnarray}
\frac{\tilde{\psi}_{r+\lambda}(x)}{\tilde{\psi}_{r+\lambda}(y)}\leq
\mathbb{E}_{x}\left[e^{-r \tau_{(a,y)}}\right] \leq
\frac{\tilde{\psi}_r(x)}{\tilde{\psi}_r(y)}\label{Moment}
\end{eqnarray}
for all $x\leq y$,
$$\lim_{x\downarrow a}\mathbb{E}_{x}\left[e^{-r \tau_{(a,y)}}\right] = 0,$$ and
$$
\tilde{\psi}_{r+\lambda}(x)\sup_{y\geq
x}\left[\frac{g(y)}{\tilde{\psi}_{r+\lambda}(y)}\right]\leq
\sup_{y\geq x}\mathbb{E}_{x}\left[e^{-r
\tau_{(a,y)}}g(X_{\tau_{(a,y)}})\right] \leq
\tilde{\psi}_{r}(x)\sup_{y\geq
x}\left[\frac{g(y)}{\tilde{\psi}_{r}(y)}\right]
$$
provided that the suprema exist.
\end{lemma}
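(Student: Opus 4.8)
The plan is to reduce the entire statement to a single pathwise comparison of the two running maxima, and then to transfer that comparison to Laplace transforms by means of the elementary identity $\mathbb{E}_x[e^{-r\tau}]=\mathbb{P}_x(\tau\le T_r)$, valid for any $T_r\sim\exp(r)$ independent of the process. First I would establish the pathwise bound \eqref{supineq}. Since $X$ and $\tilde{X}$ are driven by the same Brownian motion and, by the computation preceding \eqref{associateddiffusion}, $X$ solves the same stochastic differential equation as $\tilde{X}$ augmented only by the nonpositive jumps $\int_{\mathcal{S}}\gamma(X_{t-},z)N(dz,dt)$, pathwise uniqueness gives $X_s=\tilde{X}_s$ on $[0,T_\lambda)$, while the comparison already recorded in the proof of Lemma \ref{domrexcessive} gives $X_s\le\tilde{X}_s$ for all $s$. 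The right inequality $M_{T_r}\le\tilde{M}_{T_r}$ is then immediate. For the left inequality I split on the two clocks: on $\{T_\lambda>T_r\}$ the two paths coincide on $[0,T_r]$, so $\tilde{M}_{T_r\wedge T_\lambda}=\tilde{M}_{T_r}=M_{T_r}$; on $\{T_\lambda\le T_r\}$ the continuity of $\tilde{X}$ yields $\tilde{M}_{T_\lambda}=\sup_{s<T_\lambda}\tilde{X}_s=\sup_{s<T_\lambda}X_s\le M_{T_r}$, the final step using $T_\lambda\le T_r$.

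Next I would pass to Laplace transforms. Because $X$ has no positive jumps and cannot reach $a$, it exits $(a,y)$ only by hitting $y$ continuously, so $\{\tau_{(a,y)}\le T_r\}=\{M_{T_r}\ge y\}$ and hence $\mathbb{E}_x[e^{-r\tau_{(a,y)}}]=\mathbb{P}_x(M_{T_r}\ge y)$; the same reasoning applied to $\tilde{X}$ gives $\mathbb{E}_x[e^{-\theta\tilde{\tau}_{(a,y)}}]=\mathbb{P}_x(\tilde{M}_{T_\theta}\ge y)=\tilde{\psi}_\theta(x)/\tilde{\psi}_\theta(y)$. Applying the monotone events $\{\,\cdot\ge y\,\}$ to \eqref{supineq} and taking probabilities produces \eqref{Moment}, once I record two facts: $T_r\wedge T_\lambda\sim\exp(r+\lambda)$, and $T_\lambda$ (a functional of the Poisson part) is independent of $\tilde{X}$ (a functional of $W$) by the L\'{e}vy--It\^{o} decomposition. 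Consequently $T_r\wedge T_\lambda$ is an $\exp(r+\lambda)$ clock independent of $\tilde{X}$, and $\mathbb{P}_x(\tilde{M}_{T_r\wedge T_\lambda}\ge y)=\tilde{\psi}_{r+\lambda}(x)/\tilde{\psi}_{r+\lambda}(y)$, giving the left end of \eqref{Moment}.

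The limit $\lim_{x\downarrow a}\mathbb{E}_x[e^{-r\tau_{(a,y)}}]=0$ then follows by squeezing between $0$ and $\tilde{\psi}_r(x)/\tilde{\psi}_r(y)$ and invoking $\tilde{\psi}_r(a+)=0$, the standard behaviour of the increasing fundamental solution at a natural boundary. For the supremum inequality I would use $X_{\tau_{(a,y)}}=y$ a.s.\ to write $\mathbb{E}_x[e^{-r\tau_{(a,y)}}g(X_{\tau_{(a,y)}})]=g(y)\,\mathbb{E}_x[e^{-r\tau_{(a,y)}}]$, multiply \eqref{Moment} by $g(y)$, and take suprema. The only care needed is the sign of $g$: since $g>0$ on $(x_0,b)$, the quantity $\sup_{y\ge x}g(y)/\tilde{\psi}_\theta(y)$ is strictly positive for every $x\in\mathcal{I}$, so states with $g(y)<0$ are dominated and may be discarded from both suprema, after which the termwise bounds multiply through with the correct orientation.

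The step I expect to be most delicate is the pathwise maximum inequality \eqref{supineq}, and specifically the case $\{T_\lambda\le T_r\}$: it hinges on the continuity of $\tilde{X}$, so that the running maximum up to the first jump is attained as a left limit and coincides with that of $X$, together with the independence of the first jump time $T_\lambda$ from the diffusion $\tilde{X}$, which is precisely what turns $T_r\wedge T_\lambda$ into a bona fide $\exp(r+\lambda)$ killing clock for $\tilde{X}$. A secondary, purely bookkeeping, obstacle is the sign handling of $g$ in the final supremum bound.
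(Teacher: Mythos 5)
Your proposal is correct and follows essentially the same route as the paper: the pathwise comparison $\tilde{M}_{T_r\wedge T_\lambda}=M_{T_r\wedge T_\lambda}\leq M_{T_r}\leq \tilde{M}_{T_r}$, the identity $\mathbb{E}_x[e^{-\theta\tau_y}]=\mathbb{P}_x[M_{T_\theta}\geq y]$ together with $T_r\wedge T_\lambda\sim\exp(r+\lambda)$, the squeeze against $\tilde{\psi}_r(a+)=0$ at the natural boundary, and multiplication by $g(y)$ via $X_{\tau_{(a,y)}}=y$ a.s. Your extra care about the sign of $g$ in the final supremum and the case split on $\{T_\lambda\lessgtr T_r\}$ only make explicit what the paper leaves implicit.
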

\begin{proof}
We first observe that since $X_t = \tilde{X}_t-\eta_t$, where $\eta_t$ is a spectrally negative, nondecreasing, and non-martingale jump process, we naturally have
that $X_t\leq \tilde{X}_t$ and, therefore, that $M_t\leq \tilde{M}_t$ a.s.. Consequently, we observe that $M_{T_r}\leq \tilde{M}_{T_r}$ a.s. as well proving \eqref{supineq}. On the other hand, since
$T_r\wedge T_\lambda \leq T_r$ and the running supremum process is nondecreasing, we also observe that $M_{T_r}\geq M_{T_{r}\wedge T_{\lambda}}=\tilde{M}_{T_{r}\wedge T_{\lambda}}$ a.s. Noticing now that $T_\lambda\wedge T_r\sim T_{r+\lambda}$ and applying the inequality \eqref{supineq} and the identities
$\mathbb{P}_x[\tilde{M}_{T_\theta}\geq y]=\mathbb{P}_x[\tilde{\tau}_y < T_\theta]=\mathbb{E}_x[e^{-\theta\tilde{\tau}_y}]$
and
$\mathbb{P}_x[M_{T_\theta}\geq y]=\mathbb{P}_x[\tau_y < T_\theta]=\mathbb{E}_x[e^{-\theta\tau_y}]$
then proves \eqref{Moment}. The rest of the alleged results then follow from the
nonnegativity of $g(x)$ and the fact $\lim_{x\downarrow a}\tilde{\psi}_{\theta}(x)=0$ for a natural boundary.
\end{proof}

Lemma \ref{domination} states a simple sufficient condition under which the
expected payoff accrued from following a standard one-sided threshold policy
can be sandwiched between two values defined with respect to the associated
continuous diffusion process. Since both bounding values can be under certain
circumstances identified as the values of an optimal stopping problem of the
associated continuous diffusion, Lemma \ref{domination} essentially states a sufficient
condition under which the maximal value which can be attained by following a single threshold
strategy is confined between the above mentioned two values.

A set of important implications of Lemma \ref{domination} applicable in the analysis of the first order condition characterizing the optimal exercise threshold
is summarized in our next corollary.
\begin{corollary}\label{dominationcorollary}
Assume that condition {\rm (}{\bf A1}{\rm )} is satisfied. Then,
\begin{eqnarray}
\frac{\tilde{\psi}_{r+\lambda}'(x)}{\tilde{\psi}_{r+\lambda}(x)}\geq
\frac{\psi_\lambda'(x)}{\psi_\lambda(x)} \geq
\frac{\tilde{\psi}_r'(x)}{\tilde{\psi}_r(x)}\label{logder}
\end{eqnarray}
for all $x\in \mathcal{I}$ and
\begin{eqnarray}
g'(x)\frac{\tilde{\psi}_{r+\lambda}(x)}{\tilde{\psi}_{r+\lambda}'(x)}-g(x)\leq
g'(x)\frac{\psi_\lambda(x)}{\psi_\lambda'(x)} - g(x) \leq
g'(x)\frac{\tilde{\psi}_r(x)}{\tilde{\psi}_r'(x)}-g(x)\label{nc1}
\end{eqnarray}
for all $x\in \mathcal{I}\backslash \mathcal{N}$.
Moreover, for all $x\in \mathcal{I}\backslash \mathcal{N}$ it holds that
\begin{eqnarray}
\frac{(\mathcal{L}_{\tilde{\psi}_{r+\lambda}} g)(x)}{\tilde{\psi}_{r+\lambda}'(x)}\leq \frac{(\mathcal{L}_{\psi_\lambda} g)(x)}{\psi_\lambda'(x)}  \leq \frac{(\mathcal{L}_{\tilde{\psi}_r} g)(x)}{\tilde{\psi}_{r}'(x)},\label{Martin}
\end{eqnarray}
where
\begin{eqnarray}
(\mathcal{L}_{u} g)(x) = \frac{g'(x)}{S'(x)}u(x)-\frac{u'(x)}{S'(x)}g(x).\label{Martin1}
\end{eqnarray}
\end{corollary}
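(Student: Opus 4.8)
The plan is to obtain all three displayed chains as successive consequences of the sandwiching estimate \eqref{Moment} from Lemma \ref{domination}, combined with the identity of Lemma \ref{exittime}. First I would feed the one into the other: since condition (\textbf{A1}) is in force, Lemma \ref{exittime} gives $\mathbb{E}_x[e^{-r\tau_{(a,y)}}] = \psi_\lambda(x)/\psi_\lambda(y)$, and inserting this into \eqref{Moment} yields, for all $x\leq y$,
\begin{equation*}
\frac{\tilde{\psi}_{r+\lambda}(x)}{\tilde{\psi}_{r+\lambda}(y)} \leq \frac{\psi_\lambda(x)}{\psi_\lambda(y)} \leq \frac{\tilde{\psi}_r(x)}{\tilde{\psi}_r(y)}.
\end{equation*}
The key observation is that each of these two inequalities is nothing but the monotonicity of a quotient: rearranging the cross terms shows that $x\mapsto \psi_\lambda(x)/\tilde{\psi}_r(x)$ and $x\mapsto \tilde{\psi}_{r+\lambda}(x)/\psi_\lambda(x)$ are both non-decreasing on $\mathcal{I}$.

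Next I would differentiate. All functions in play are positive, strictly increasing and of class $C^2$ (the fundamental solutions $\tilde{\psi}_\theta$ by the classical diffusion theory, and $\psi_\lambda$ by (\textbf{A1})), so the quotients are differentiable with non-negative derivative. Writing $\frac{d}{dx}[\psi_\lambda/\tilde{\psi}_r]\geq 0$ and clearing the positive denominator gives $\psi_\lambda'(x)\tilde{\psi}_r(x)\geq \psi_\lambda(x)\tilde{\psi}_r'(x)$; dividing by $\psi_\lambda(x)\tilde{\psi}_r(x)>0$ produces the logarithmic-derivative bound $\psi_\lambda'/\psi_\lambda \geq \tilde{\psi}_r'/\tilde{\psi}_r$. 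The same argument applied to the other quotient yields the remaining inequality in \eqref{logder}, which holds on all of $\mathcal{I}$ since no derivative of $g$ enters.

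The last two displays then follow with little further work. Because the logarithmic derivatives in \eqref{logder} are strictly positive, passing to reciprocals reverses the chain into
\begin{equation*}
\frac{\tilde{\psi}_{r+\lambda}(x)}{\tilde{\psi}_{r+\lambda}'(x)} \leq \frac{\psi_\lambda(x)}{\psi_\lambda'(x)} \leq \frac{\tilde{\psi}_r(x)}{\tilde{\psi}_r'(x)};
\end{equation*}
multiplying through by $g'(x)\geq 0$, which is legitimate precisely on $\mathcal{I}\backslash\mathcal{N}$ where $g$ is differentiable with finite non-negative derivative, and subtracting $g(x)$ gives \eqref{nc1}. Finally, a direct computation from \eqref{Martin1} shows that $(\mathcal{L}_u g)(x)/u'(x) = S'(x)^{-1}[g'(x)u(x)/u'(x)-g(x)]$ for each of the three solutions $u\in\{\tilde{\psi}_{r+\lambda},\psi_\lambda,\tilde{\psi}_r\}$, so \eqref{Martin} is obtained from \eqref{nc1} merely by dividing through by the common strictly positive factor $S'(x)$.

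I expect the only genuine point requiring care to be the differentiation step: one must confirm that the quotients are differentiable (guaranteed by the $C^2$ regularity noted above) and that all three denominators $\psi_\lambda'$, $\tilde{\psi}_r'$, $\tilde{\psi}_{r+\lambda}'$ are strictly positive, so that the reciprocal and the division by $S'(x)$ preserve rather than reverse the inequalities. The strict positivity of the derivatives is in any case implicit in the statement, since $\psi_\lambda/\psi_\lambda'$ already appears in \eqref{nc1}. Everything else is routine bookkeeping.
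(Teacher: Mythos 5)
Your proof is correct and follows essentially the same route as the paper's: both deduce the two-point ratio inequality from Lemma \ref{domination} combined with Lemma \ref{exittime}, localize it into the logarithmic-derivative ordering \eqref{logder}, and then obtain \eqref{nc1} and \eqref{Martin} by the same reciprocal, multiplication-by-$g'$, and division-by-$S'$ manipulations. The only cosmetic difference is in the localization step: you differentiate the monotone quotients $\psi_\lambda/\tilde{\psi}_r$ and $\tilde{\psi}_{r+\lambda}/\psi_\lambda$, whereas the paper writes the differences as integrals of the derivatives and applies the mean value theorem for integrals before letting $x\uparrow y$.
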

\begin{proof}
Lemma \ref{domination} implies that for all $x\leq y$
$$
1-\frac{\tilde{\psi}_{r+\lambda}(x)}{\tilde{\psi}_{r+\lambda}(y)}\geq
1- \frac{\psi_\lambda(x)}{\psi_\lambda(y)} \geq
1- \frac{\tilde{\psi}_r(x)}{\tilde{\psi}_r(y)}
$$
which, in turn, implies that
$$
\int_{x}^{y}\frac{\tilde{\psi}_{r+\lambda}'(t)}{\tilde{\psi}_{r+\lambda}(y)}dt\geq
\int_{x}^{y}\frac{\psi_\lambda'(t)}{\psi_\lambda(y)}dt \geq
\int_{x}^{y}\frac{\tilde{\psi}_r'(t)}{\tilde{\psi}_r(y)}dt.
$$
Applying the mean value theorem for integrals and letting $x\uparrow y$ then proves inequality \eqref{logder}.
Inequality \eqref{nc1} then follows from the monotonicity of the reward payoff and inequality \eqref{logder}.
Finally, inequality \eqref{Martin} follows from inequality \eqref{nc1} after noticing that
$$
(\mathcal{L}_{\psi_\lambda} g)(x) = \frac{\psi_\lambda'(x)}{S'(x)}\left[\frac{g'(x)}{\psi_\lambda'(x)}\psi_\lambda(x)-g(x)\right].
$$
\end{proof}

\noindent Corollary \ref{dominationcorollary} essentially shows that if condition ({\bf A1}) is satisfied, then the logarithmic
growth rates of the fundamental solutions are ordered. This result is important, since it implies that the ratio
$g(x)/\psi_\lambda(x)$ is decreasing on the set where the ratio $g(x)/\tilde{\psi}_r(x)$ is decreasing and increasing on the set where
the ratio $g(x)/\tilde{\psi}_{r+\lambda}(x)$ is increasing. Given the representation \eqref{rscale}, we notice that the results of Corollary \ref{dominationcorollary} are satisfied by the $r$-scale function $F^{(r)}(x)$ as well since the underlying process is of unbounded variation
(by Lemma 8.2. in \citet{kyprianou}).
Moreover, since
\begin{eqnarray}
(\mathcal{L}_{\tilde{\psi}_\theta} g)'(x) = (\tilde{\mathcal{A}}_\theta g)(x)\tilde{\psi}_{\theta}(x)m'(x),\quad x\in \mathcal{I}\backslash \mathcal{N},\label{operator}
\end{eqnarray}
we observe the monotonicity of the ratio $g(x)/\tilde{\psi}_\theta(x)$ is
essentially dictated by the properties of the mapping $(\tilde{\mathcal{A}}_\theta g)(x)$.

We now present the following auxiliary result needed later in the proof of the existence of a unique optimal stopping boundary in the jump diffusion setting.
\begin{lemma}\label{l1}
Assume that there is a unique $\hat{x}_\theta > x_0$ so that
$$
(\tilde{\mathcal{A}}_{\theta}g)(x) \gtrless 0,\quad x \lessgtr \hat{x}_\theta, x\not\in \mathcal{N}.
$$
Assume also that $g'(x+)\geq g'(x-)$ for all $x\in (x_0,\hat{x}_\theta)\cap\mathcal{N}$ and $g'(x+)\leq g'(x-)$ for all $x\in (\hat{x}_\theta,b)\cap\mathcal{N}$. Then, there is a unique maximizing threshold
\begin{eqnarray}
x_\theta^\ast = \argmax\left\{\frac{g(x)}{\tilde{\psi}_\theta(x)}\right\} \in [\hat{x}_\theta,b)\label{threshold}
\end{eqnarray}
such that $\tilde{\tau}_{x_\theta^\ast}=\inf\{t\geq 0: \tilde{X}_t\geq x_\theta^\ast\}$ constitutes the optimal stopping strategy of \eqref{associatedprob} and
\begin{eqnarray}
\tilde{V}_\theta(x) = \tilde{\psi}_{\theta}(x)\sup_{y\geq x}\left\{\frac{g(y)}{\tilde{\psi}_{\theta}(y)}\right\} = \begin{cases}
g(x) &x\geq x_\theta^\ast\\
\tilde{\psi}_{\theta}(x)\frac{g(x_\theta^\ast)}{\tilde{\psi}_{\theta}(x_\theta^\ast)} &x < x_\theta^\ast.
\end{cases}\label{value}
\end{eqnarray}
\end{lemma}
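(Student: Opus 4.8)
The plan is to reduce the whole statement to the behaviour of the single function $(\mathcal{L}_{\tilde{\psi}_\theta} g)$, whose sign governs the monotonicity of the ratio that is being maximised. Writing $H(x):=g(x)/\tilde{\psi}_\theta(x)$ and using the representation $(\mathcal{L}_{\tilde{\psi}_\theta} g)(x)=S'(x)^{-1}\bigl[g'(x)\tilde{\psi}_\theta(x)-\tilde{\psi}_\theta'(x)g(x)\bigr]$ coming from \eqref{Martin1}, a direct differentiation gives, on $\mathcal{I}\backslash\mathcal{N}$,
$$
H'(x)=\frac{g'(x)\tilde{\psi}_\theta(x)-\tilde{\psi}_\theta'(x)g(x)}{\tilde{\psi}_\theta^2(x)}=\frac{S'(x)}{\tilde{\psi}_\theta^2(x)}(\mathcal{L}_{\tilde{\psi}_\theta} g)(x).
$$
Since $S'>0$ and $\tilde{\psi}_\theta>0$, the sign of $H'$ coincides with the sign of $(\mathcal{L}_{\tilde{\psi}_\theta} g)$, so the maximisation of $g/\tilde{\psi}_\theta$ is entirely controlled by where $(\mathcal{L}_{\tilde{\psi}_\theta} g)$ is positive or negative.

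First I would establish that $(\mathcal{L}_{\tilde{\psi}_\theta} g)$ is unimodal, namely increasing on $(a,\hat{x}_\theta)$ and decreasing on $(\hat{x}_\theta,b)$. On $\mathcal{I}\backslash\mathcal{N}$ this is immediate from \eqref{operator}, $(\mathcal{L}_{\tilde{\psi}_\theta} g)'(x)=(\tilde{\mathcal{A}}_\theta g)(x)\tilde{\psi}_\theta(x)m'(x)$, because $\tilde{\psi}_\theta>0$, $m'>0$, and by hypothesis $(\tilde{\mathcal{A}}_\theta g)$ is positive below $\hat{x}_\theta$ and negative above. The conditions on $\mathcal{N}$ serve precisely to preserve this monotonicity across the kinks: as $g$ and $\tilde{\psi}_\theta$ are continuous, the only discontinuity of $(\mathcal{L}_{\tilde{\psi}_\theta} g)$ at $x\in\mathcal{N}$ is the jump $\tilde{\psi}_\theta(x)S'(x)^{-1}\bigl[g'(x+)-g'(x-)\bigr]$, and $g'(x+)\ge g'(x-)$ for $x\in(x_0,\hat{x}_\theta)\cap\mathcal{N}$ makes it nonnegative (consistent with increasing), while $g'(x+)\le g'(x-)$ for $x\in(\hat{x}_\theta,b)\cap\mathcal{N}$ makes it nonpositive (consistent with decreasing). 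Hence $(\mathcal{L}_{\tilde{\psi}_\theta} g)$ is genuinely increasing up to $\hat{x}_\theta$ and decreasing afterwards.

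Next I would locate the sign change. Since $g(x_0)=0$ and $g$ is nondecreasing, $(\mathcal{L}_{\tilde{\psi}_\theta} g)(x_0)=g'(x_0)\tilde{\psi}_\theta(x_0)/S'(x_0)\ge 0$, and as $(\mathcal{L}_{\tilde{\psi}_\theta} g)$ increases on $(x_0,\hat{x}_\theta)$ it is strictly positive on $(x_0,\hat{x}_\theta]$; thus $H$ is strictly increasing there and any maximiser lies in $[\hat{x}_\theta,b)$. On $(\hat{x}_\theta,b)$ the function $(\mathcal{L}_{\tilde{\psi}_\theta} g)$ is strictly decreasing, hence vanishes at most once; the natural-boundary behaviour at $b$, where $\tilde{\psi}_\theta(b-)=+\infty$ forces $H(b-)=0<H(\hat{x}_\theta)$ (using the implicit requirement that $g$ grow no faster than $\tilde{\psi}_\theta$), guarantees that $H$ must eventually decrease, so $(\mathcal{L}_{\tilde{\psi}_\theta} g)$ does become negative and has exactly one zero $x_\theta^\ast\in[\hat{x}_\theta,b)$. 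At $x_\theta^\ast$ the sign of $H'$ passes from $+$ to $-$, which identifies $x_\theta^\ast=\argmax\{g/\tilde{\psi}_\theta\}$ as the unique maximiser.

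Finally I would verify optimality and the formula \eqref{value}. Unimodality of $H$ yields $\sup_{y\ge x}H(y)=H(x)$ for $x\ge x_\theta^\ast$ and $\sup_{y\ge x}H(y)=H(x_\theta^\ast)$ for $x<x_\theta^\ast$; multiplying by $\tilde{\psi}_\theta(x)$ produces exactly the right-hand side of \eqref{value}, which I call $W$. The admissibility of threshold rules together with the identity $\sup_{y\ge x}\mathbb{E}_x[e^{-\theta\tilde{\tau}_{(a,y)}}g(\tilde{X}_{\tilde{\tau}_{(a,y)}})]=\tilde{\psi}_\theta(x)\sup_{y\ge x}H(y)$ recorded before the lemma gives $\tilde{V}_\theta\ge W$. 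For the reverse inequality I would check that $W$ is a $\theta$-excessive majorant of $g$: $W\ge g$ since $H(x_\theta^\ast)\ge H(x)$; $\tilde{\mathcal{A}}_\theta W=0$ on $(a,x_\theta^\ast)$ as $W\propto\tilde{\psi}_\theta$; $\tilde{\mathcal{A}}_\theta W=\tilde{\mathcal{A}}_\theta g\le 0$ on $(x_\theta^\ast,b)$ because $x_\theta^\ast\ge\hat{x}_\theta$; and the first-order condition $H'(x_\theta^\ast)=0$ is precisely the smooth-fit identity $W'(x_\theta^\ast-)=g'(x_\theta^\ast)=W'(x_\theta^\ast+)$, so no positive atom is created at the junction. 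As $\tilde{V}_\theta$ is the least $\theta$-excessive majorant of $g$, this gives $\tilde{V}_\theta\le W$, whence $\tilde{V}_\theta=W$ and $\tilde{\tau}_{x_\theta^\ast}$ is optimal. I expect the main obstacle to be the third step: ensuring that $(\mathcal{L}_{\tilde{\psi}_\theta} g)$ genuinely changes sign inside $(\hat{x}_\theta,b)$ rather than remaining positive and pushing the maximiser to $b$, which forces careful use of the natural-boundary assumption and the transversality of $g$ against $\tilde{\psi}_\theta$, with the bookkeeping of the one-sided derivative jumps at $\mathcal{N}$ as a secondary technical nuisance.
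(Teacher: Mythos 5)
Your overall architecture coincides with the paper's: reduce everything to the sign of $(\mathcal{L}_{\tilde{\psi}_\theta}g)$ via $\frac{d}{dx}[g(x)/\tilde{\psi}_\theta(x)]=\frac{S'(x)}{\tilde{\psi}_\theta^2(x)}(\mathcal{L}_{\tilde{\psi}_\theta}g)(x)$, establish unimodality of $(\mathcal{L}_{\tilde{\psi}_\theta}g)$ from \eqref{operator} together with the one-sided derivative conditions on $\mathcal{N}$, and then run a verification (excessive-majorant) argument for \eqref{value}. Those parts are correct and match the paper, including the observation that $(\mathcal{L}_{\tilde{\psi}_\theta}g)\geq 0$ on $(a,x_0]$ because $g\leq 0$ and $g'\geq 0$ there.

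The genuine gap is exactly where you anticipated it: the existence of an \emph{interior} sign change of $(\mathcal{L}_{\tilde{\psi}_\theta}g)$ on $(\hat{x}_\theta,b)$. You close this by asserting $H(b-)=g(b-)/\tilde{\psi}_\theta(b-)=0$ ``using the implicit requirement that $g$ grow no faster than $\tilde{\psi}_\theta$.'' No such requirement appears among the hypotheses of the lemma, so as written your argument proves the claim only under an additional assumption; moreover, even $H(b-)=0$ by itself would not immediately rule out the supremum being approached only asymptotically if one has not already pinned down the monotone structure of $H$ near $b$. The paper derives the needed decay from the stated hypotheses alone: writing, for $x>K>x'=\max\{x:x\in\mathcal{N}\}\vee\hat{x}_\theta$,
\begin{equation*}
(\mathcal{L}_{\tilde{\psi}_\theta} g)(x) = (\mathcal{L}_{\tilde{\psi}_\theta} g)(K)+\int_{K}^x(\tilde{\mathcal{A}}_\theta g)(y)\tilde{\psi}_{\theta}(y)m'(y)\,dy
 = (\mathcal{L}_{\tilde{\psi}_\theta} g)(K)+\frac{(\tilde{\mathcal{A}}_\theta g)(\xi)}{\theta}\left[\frac{\tilde{\psi}_{\theta}'(x)}{S'(x)}-\frac{\tilde{\psi}_{\theta}'(K)}{S'(K)}\right],
\end{equation*}
and using $(\tilde{\mathcal{A}}_\theta g)<0$ on $(\hat{x}_\theta,b)$ together with the natural-boundary fact $\lim_{x\to b}\tilde{\psi}_\theta'(x)/S'(x)=+\infty$, it concludes $(\mathcal{L}_{\tilde{\psi}_\theta}g)(x)\to-\infty$ as $x\to b$. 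This forces a sign change strictly before $b$, and uniqueness then follows from the monotonicity you already established. So the missing ingredient is the identity $\theta\,\tilde{\psi}_\theta(y)m'(y)=\frac{d}{dy}\bigl[\tilde{\psi}_\theta'(y)/S'(y)\bigr]$ and the boundary classification of $b$, not a growth condition on $g$. A secondary, minor point: in your verification step you invoke smooth fit at $x_\theta^\ast$; this is fine when $(\mathcal{L}_{\tilde{\psi}_\theta}g)$ actually vanishes at $x_\theta^\ast$, but if the maximum sits at $\hat{x}_\theta\in\mathcal{N}$ the relevant fact is that the kink of the candidate value there has the correct (concave) orientation, which your hypothesis $g'(x+)\leq g'(x-)$ on $(\hat{x}_\theta,b)\cap\mathcal{N}$ supplies; the paper sidesteps this by approximating the candidate value with $C^2$ functions and passing to the limit via Fatou.
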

\begin{proof}
Consider the behavior of the mapping $(\mathcal{L}_{\tilde{\psi}_\theta} g)(x)$ on $\mathcal{I}$. We first observe that the monotonicity and non-positivity of the exercise payoff $g(x)$ on $(a,x_0)$ guarantee that $(\mathcal{L}_{\tilde{\psi}_\theta} g)(x) \geq 0$ on $(a,x_0)$ and $(\mathcal{L}_{\tilde{\psi}_\theta} g)(x_0+)=\frac{g'(x_0+)}{S'(x_0)}\psi(x_0) \geq0$. Second, applying \eqref{operator}, the definition of $(\mathcal{L}_{\tilde{\psi}_\theta} g)(x)$, and invoking our assumption on the local behavior of $g'(x)$ on $\mathcal{N}$ shows that $(\mathcal{L}_{\tilde{\psi}_\theta} g)'(x)>0$ for all $x\in (x_0,\hat{x}_\theta)\backslash \mathcal{N}$ and
$$
(\mathcal{L}_{\tilde{\psi}_\theta} g)(x+)=(\mathcal{L}_{\tilde{\psi}_\theta} g)(x-)+\frac{\tilde{\psi}_{\theta}(x)}{S'(x)}\left[g'(x+)-g'(x-)\right]\geq 0
$$
for all $x\in (x_0,\hat{x}_\theta)\cap \mathcal{N}$. In a completely analogous fashion, we find that $(\mathcal{L}_{\tilde{\psi}_\theta} g)'(x)<0$ for all $x\in (\hat{x}_\theta,b)\backslash \mathcal{N}$ and
$$
(\mathcal{L}_{\tilde{\psi}_\theta} g)(x+)=(\mathcal{L}_{\tilde{\psi}_\theta} g)(x-)+\frac{\tilde{\psi}_{\theta}(x)}{S'(x)}\left[g'(x+)-g'(x-)\right]\leq 0
$$
for all $x\in (\hat{x}_\theta,b)\cap \mathcal{N}$. Consequently, we observe that our conditions guarantee that $(\mathcal{L}_{\tilde{\psi}_\theta} g)(x)$ does not decrease on $(x_0,\hat{x}_\theta)$, does not increase on $(\hat{x}_\theta,b)$ and, therefore, cannot change sign from positive to negative on $(a, \hat{x}_\theta)$. Denote now as $x' = \max\{x: x\in \mathcal{N}\}\vee \hat{x}_\theta$ and let $x > K > x'$. We then have
\begin{eqnarray*}
(\mathcal{L}_{\tilde{\psi}_\theta} g)(x) &=& (\mathcal{L}_{\tilde{\psi}_\theta} g)(K)+\int_{K}^x(\tilde{\mathcal{A}}_\theta g)(y)\tilde{\psi}_{\theta}(y)m'(y)dy\\
&=& (\mathcal{L}_{\tilde{\psi}_\theta} g)(K)+\frac{(\tilde{\mathcal{A}}_\theta g)(\xi)}{\theta}\left[\frac{\tilde{\psi}_{\theta}'(x)}{S'(x)}-\frac{\tilde{\psi}_{\theta}'(K)}{S'(K)}\right]\rightarrow -\infty
\end{eqnarray*}
as $x\rightarrow b$, since $(\tilde{\mathcal{A}}_\theta g)(x) < 0$ on $(\hat{x}_\theta,b)$ and $\lim_{x\rightarrow b}\tilde{\psi}_{\theta}'(x)/S'(x)=\infty$ for a natural boundary. Consequently, we observe that there exists a threshold $x_\theta^\ast\in[\hat{x}_\theta,b)$ at which $(\mathcal{L}_{\tilde{\psi}_\theta} g)(x)$ changes sign. Given the proven monotonicity of $(\mathcal{L}_{\tilde{\psi}_\theta} g)(x)$ on $(\hat{x}_\theta, b)$ then proves that this threshold is unique.
Noticing now that
$$
\frac{d}{dx}\left[\frac{g(x)}{\tilde{\psi}_{\theta}(x)}\right] = \frac{S'(x)}{\tilde{\psi}_{\theta}^2(x)}(\mathcal{L}_{\tilde{\psi}_\theta} g)(x)
$$
for $x\in \mathcal{I}\backslash \mathcal{N}$ demonstrates that $x_\theta^\ast=\argmax\left\{g(x)/\tilde{\psi}_{\theta}(x)\right\}$.

Let us now establish that the proposed value function dominates the value of any admissible $\mathbb{F}$-stopping strategy. Given the existence and uniqueness of $x_\theta^\ast$ we first observe that the proposed value function is nonnegative, continuous, continuously differentiable on $\mathcal{I}\backslash \mathcal{N}$, twice continuously differentiable on $\mathcal{I}\backslash \mathcal{N}$, satisfies the inequalities  $\tilde{V}_\theta'(x\pm) <\infty$ and $|\tilde{V}_\theta''(x\pm)| <\infty$ for all $x\in(x_\theta^\ast, b)\cap \mathcal{N}$, and dominates the exercise payoff $g(x)$. Moreover, $(\tilde{\mathcal{A}}_\theta \tilde{V}_\theta)(x) = 0$ for all $x\in (a, x_\theta^\ast)$. The assumed monotonicity of the function $(\mathcal{L}_{\tilde{\psi}_\theta} g)(x)$ and equation \eqref{operator} guarantee that  $(\tilde{\mathcal{A}}_\theta \tilde{V}_\theta)(x) \leq 0$ for all $x\in (x_\theta^\ast, b)\backslash \mathcal{N}$. It is now clear that our conditions guarantee that there exists a sequence $\{v_{j}\}_{j = 1}^{\infty}$ of mappings $v_{j}\in C^{2}(\mathcal{I})$ such that (cf. \citet{oksendal}, pp. 315--318)
\begin{itemize}
\item[] $v_{j}\rightarrow \tilde{V}_\theta$ uniformly on compact subsets of $\mathcal{I}$, as $j\rightarrow\infty$;
\item[] $\tilde{\mathcal{A}}_\theta v_{j}\rightarrow \tilde{\mathcal{A}}_\theta \tilde{V}_\theta$ uniformly on compact subsets of $\mathcal{I}\backslash \mathcal{N}$,
as $j\rightarrow\infty$;
\item[] $\{\tilde{\mathcal{A}}_\theta v_{j}\}_{j=1}^{\infty}$ is locally bounded on $\mathcal{I}$.
\end{itemize}
Let $\{A_N\}_{N\geq 1}$ be an increasing sequence of open subintervals of $\mathcal{I}$ satisfying the condition $A_N\uparrow \mathcal{I}$ as $N\uparrow\infty$. Applying the It{\^o}-Doeblin theorem to the mapping $(t,x)\mapsto e^{-\theta t}v_j(x)$ yields
$$
\mathbb{E}_{x}\left[e^{-\theta T_N}v_j(X_{T_N})\right] = v_j(x) + \mathbb{E}_{x}\int_0^{T_N}e^{-\theta s}(\tilde{\mathcal{A}}_\theta v_j)(\tilde{X}_s)ds,
$$
where $T_N=\tau\wedge \inf\{t\geq 0:X_t\not \in A_N\}\wedge N$ is a sequence of almost surely finite $\mathbb{F}$-stopping times converging to the arbitrary $\mathbb{F}$-stopping time $\tau$ as $N\uparrow\infty$. Reordering terms and applying Fatou's lemma now yields
\begin{eqnarray*}
\tilde{V}_\theta(x) &=& \lim_{j\rightarrow\infty}\mathbb{E}_{x}\left[e^{-\theta T_N}v_j(\tilde{X}_{T_N}) - \int_0^{T_N}e^{-\theta s}(\tilde{\mathcal{A}}_\theta v_j)(\tilde{X}_s)ds\right]\\
&\geq& \mathbb{E}_{x}\left[e^{-\theta T_N}\tilde{V}_\theta(\tilde{X}_{T_N}) - \int_0^{T_N}e^{-\theta s}(\tilde{\mathcal{A}}_\theta \tilde{V}_\theta)(\tilde{X}_s)ds\right]\\
&\geq& \mathbb{E}_{x}\left[e^{-\theta T_N}g(\tilde{X}_{T_N})\right].
\end{eqnarray*}
Letting $N\uparrow\infty$ and applying Fatou's lemma again then demonstrates that the proposes value function satisfies the inequality
$$
\tilde{V}_\theta(x) \geq \mathbb{E}_{x}\left[e^{-\theta \tau}g(\tilde{X}_{\tau})\right]
$$
for any any admissible stopping time. Hence, it dominates the value of the optimal policy. However, since the proposed value is attained by the Markov time $\tilde{\tau}_{x_\theta^\ast}=\inf\{t\geq 0: \tilde{X}_t\geq x_\theta^\ast\}$ belonging into the larger class of $\mathbb{F}$-stopping times, we finally notice that the proposed value actually constitutes the value in \eqref{associatedprob}.
\end{proof}

Lemma \ref{l1} expresses a set of sufficiency conditions under which the optimal stopping strategy of the associated stopping
problems constitutes a standard threshold policy. It is clear that by imposing more smoothness assumptions on the exercise payoff result
in more easily verifiable sufficiency conditions.

\section{The Representation Theorem}

Our objective is to demonstrate that the value function
of the stopping problem \eqref{stoppingproblem} can be expressed in the familiar form
\begin{eqnarray*}
V_\lambda(x)=\psi_\lambda(x) \sup_{y \geq x}\left\{ \frac{g(y)}{\psi_\lambda(y)}\right\}
\end{eqnarray*}
in the spectrally negative jump diffusion setting as well. Our main result stating a set of sufficient conditions under which the standard representation is satisfied is established in the following.

\begin{theorem}\label{mainthm}
Assume that condition ({\bf A1}) is satisfied and that the assumptions of Lemma \ref{l1} are met for $\theta\in[r,r+\lambda]$ with the additional
condition that $(\hat{x}_{r+\lambda},\hat{x}_{r})\cap\mathcal{N}=\emptyset$. Assume also that at least one of the following conditions hold:
\begin{itemize}
  \item[(i)] there is a unique $\hat{x}\in [\hat{x}_{r+\lambda},\hat{x}_{r}]$ such that $(\mathcal{G}_rg)(x) \gtreqqless 0$ for all $x \lesseqqgtr \hat{x}, x\not\in \mathcal{N}$,
  \item[(ii)] $\mathfrak{m}(\{z\in \mathcal{S}:x+\gamma(x,z)< x_0\}) > 0$ for all $x\in [x_{r+\lambda}^\ast,x_{r}^\ast]$.
\end{itemize}
 Then, the value of the optimal stopping strategy $\tau^\ast =\inf\{t\geq 0: X_t \geq x^\ast\}$ reads as
$$
V_\lambda(x)=\psi_\lambda(x) \sup_{y \geq
x}\left\{\frac{g(y)}{\psi_\lambda(y)}\right\}=\begin{cases}
g(x) &x\geq x^\ast\\
g(x^\ast)\frac{\psi_\lambda(x)}{\psi_\lambda(x^\ast)} & x < x^\ast,
\end{cases}
$$
where $x^\ast=\argmax\{g(x)/\psi_\lambda(x)\}$ constitutes the optimal exercise threshold.
\end{theorem}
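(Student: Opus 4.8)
The plan is to identify the candidate $W(x):=\psi_\lambda(x)\sup_{y\ge x}\{g(y)/\psi_\lambda(y)\}$ simultaneously as (a) the value of the best single-threshold rule, whence $W\le V_\lambda$ is automatic, and (b) an $r$-excessive majorant of $g$ for $X$, whence $V_\lambda\le W$ because $V_\lambda$ is the least such majorant. For (a) I would invoke Lemma \ref{exittime}: since $X$ has no positive jumps it exits $(a,y)$ at $y$, so $\mathbb{E}_x[e^{-r\tau_{(a,y)}}g(X_{\tau_{(a,y)}})]=g(y)\psi_\lambda(x)/\psi_\lambda(y)$ and therefore $W(x)=\sup_{y\ge x}\mathbb{E}_x[e^{-r\tau_{(a,y)}}g(X_{\tau_{(a,y)}})]$ is exactly the supremum over the threshold subfamily of $\mathcal{T}$. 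The genuine content is thus the uniqueness of the maximiser $x^\ast$ and the reverse inequality $V_\lambda\le W$.

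To pin down $x^\ast$ I would first localise it. By Corollary \ref{dominationcorollary} the ratio $g/\psi_\lambda$ is nondecreasing wherever $g/\tilde{\psi}_{r+\lambda}$ is (hence on $(a,x_{r+\lambda}^\ast)$) and nonincreasing wherever $g/\tilde{\psi}_r$ is (hence on $(x_r^\ast,b)$), so every maximiser lies in $[x_{r+\lambda}^\ast,x_r^\ast]$. To obtain uniqueness I would differentiate $\mathcal{L}_{\psi_\lambda}g$ using $\mathcal{G}_r\psi_\lambda=0$ to derive the jump-diffusion analogue of \eqref{operator},
\[
(\mathcal{L}_{\psi_\lambda}g)'(x)=m'(x)\Bigl[\psi_\lambda(x)(\mathcal{G}_rg)(x)+\lambda\int_{\mathcal{S}}\bigl(g(x)\psi_\lambda(x+\gamma(x,z))-\psi_\lambda(x)g(x+\gamma(x,z))\bigr)\mathfrak{m}(dz)\Bigr],
\]
whose sign governs that of $(g/\psi_\lambda)'$. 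Under condition (i), $\mathcal{G}_rg$ changes sign once at $\hat{x}\in[\hat{x}_{r+\lambda},\hat{x}_r]$; a bootstrap on $(a,\hat{x})$, where $g/\psi_\lambda$ is still increasing and hence the nonlocal integral is nonnegative, shows $g/\psi_\lambda$ increases up to $\hat{x}$ and cannot recover a positive derivative thereafter. Condition (ii), forcing the downward jump to overshoot $x_0$ from every candidate threshold, is the alternative device producing the single crossing, while $(\hat{x}_{r+\lambda},\hat{x}_r)\cap\mathcal{N}=\emptyset$ keeps the kinks of $g$ out of the decisive interval.

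For $V_\lambda\le W$ I would verify $r$-excessivity of the (nondecreasing) $W$ for $X$. On the continuation set $(a,x^\ast)$ spectral negativity is decisive: every jump from $x<x^\ast$ lands at $x+\gamma(x,z)\le x<x^\ast$, still inside $(a,x^\ast)$, so $\mathcal{G}_rW=\tfrac{g(x^\ast)}{\psi_\lambda(x^\ast)}\mathcal{G}_r\psi_\lambda=0$. On the stopping set $(x^\ast,b)$, where $W=g$ locally, I would write
\[
(\mathcal{G}_rW)(x)=(\tilde{\mathcal{A}}_rg)(x)+\lambda\int_{\mathcal{S}}\bigl[W(x+\gamma(x,z))-W(x)\bigr]\mathfrak{m}(dz)\le(\tilde{\mathcal{A}}_rg)(x),
\]
using $W$ nondecreasing and $\gamma(x,z)\le0$. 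Since the Lemma \ref{l1} hypotheses for $\theta=r$ give $\tilde{\mathcal{A}}_rg<0$ on $(\hat{x}_r,b)$, this already yields $\mathcal{G}_rW\le0$ there; the leftover strip $(x^\ast,\hat{x}_r)$ is exactly where conditions (i) and (ii) are consumed---(i) through the strict negativity of $\mathcal{G}_rg$ past $\hat{x}$, and (ii) through the additional negative mass that the correction $\lambda\int[W(x+\gamma)-W(x)]\mathfrak{m}$ carries below $x_0$. With $\mathcal{G}_rW\le0$ a.e.\ established, a regularisation of $W$ by $C^2$ mappings followed by the It{\^o}--Doeblin formula and Fatou's lemma---verbatim the argument applied to $\tilde{V}_\theta$ in Lemma \ref{l1}, now run on $X$---gives $W(x)\ge\mathbb{E}_x[e^{-r\tau}g(X_\tau)]$ for every $\tau\in\mathcal{T}$, i.e.\ $V_\lambda\le W$.

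Combining the two inequalities yields $V_\lambda=W$ and the optimality of $\tau^\ast=\inf\{t\ge0:X_t\ge x^\ast\}$. The main obstacle throughout is the \emph{non-locality} of $\mathcal{G}_r$: in the purely diffusive Lemma \ref{l1} the inequality $\tilde{\mathcal{A}}_\theta g\le0$ on the stopping set is immediate and $g/\tilde{\psi}_\theta$ is automatically unimodal, whereas here a jump from the stopping region into the continuation region makes $\mathcal{G}_rW\neq\mathcal{G}_rg$ and injects the integral corrections above into both the monotonicity test for $g/\psi_\lambda$ and the excessivity test for $W$. Controlling these corrections on the delicate band $(x^\ast,\hat{x}_r)$, and checking that at least one of (i), (ii) always suffices, is the crux of the proof.
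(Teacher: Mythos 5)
Your architecture matches the paper's: the lower bound $W\le V_\lambda$ via Lemma \ref{exittime} and the threshold subfamily, the localisation of the maximiser in $[x^\ast_{r+\lambda},x^\ast_r]$ via Lemma \ref{domrexcessive} and Corollary \ref{dominationcorollary}, and a verification argument ($\mathcal{G}_rW\le0$ plus the regularisation/It{\^o}--Doeblin/Fatou machinery of Lemma \ref{l1}) are all exactly the paper's steps. The identity $\mathcal{G}_rW=0$ on $(a,x^\ast)$ by spectral negativity and the bound $\mathcal{G}_rW\le(\tilde{\mathcal{A}}_rg)(x)\le0$ on $(\hat{x}_r,b)$ via monotonicity of $W$ are correct.

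The gap is precisely where you locate ``the crux'': the band $(x^\ast,\hat{x}_r)$, on which $(\tilde{\mathcal{A}}_rg)(x)>0$, and neither mechanism you sketch closes it. For (i): on the stopping region every jump lands at $x+\gamma(x,z)\le x$, possibly inside the continuation region where $W>g$, so the nonlocal term gives $\mathcal{G}_rW\ge\mathcal{G}_rg$ --- the inequality runs the wrong way, and negativity of $\mathcal{G}_rg$ past $\hat{x}$ does not transfer to $\mathcal{G}_rW$; this is exactly the non-locality you yourself flag as the obstacle, and your proposal never overcomes it. For (ii): since $W\ge0$ everywhere, the correction satisfies $\lambda\int_{\mathcal{S}}[W(x+\gamma(x,z))-W(x)]\mathfrak{m}(dz)\ge-\lambda W(x)=-\lambda g(x)$ on the band, so there is no quantitative reason for it to dominate the strictly positive $(\tilde{\mathcal{A}}_rg)(x)$; mass landing below $x_0$ buys nothing because $W$ is nonnegative there. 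The paper's device for (ii), which you are missing, is a payoff modification: replace $g$ below $x_0$ by $\check{g}(x)=g(x)+\delta(x-x_0)$ with $\delta>0$ large. Since $\max(\check{g},0)=\max(g,0)$, the stopping problem and its value are unchanged, but condition (ii) lets one choose $\delta$ so that $\int_{\mathcal{S}}\check{g}(x+\gamma(x,z))\mathfrak{m}(dz)<0$ uniformly on $[x^\ast_{r+\lambda},x^\ast_r]$, whence $(\mathcal{G}_r\check{g})(x)=(\tilde{\mathcal{A}}_{r+\lambda}\check{g})(x)+\lambda\int_{\mathcal{S}}\check{g}(x+\gamma(x,z))\mathfrak{m}(dz)\le0$ on the band. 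A secondary point: your uniqueness argument for $x^\ast$ (the formula for $(\mathcal{L}_{\psi_\lambda}g)'$ and the ``bootstrap'') is only asserted; the paper sidesteps uniqueness at this stage by running the verification with $x'=\max\Lambda$, the largest maximiser. In short, the skeleton is right, but the one step you identify as decisive is left unexecuted, and the hints you offer for executing it fail as stated.
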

\begin{proof}
We first notice by combining the results of Lemma \ref{domrexcessive} and Lemma \ref{l1} that $\tilde{V}_{r+\lambda}(x)\leq V_\lambda(x) \leq \tilde{V}_{r}(x)$, where the
values $\tilde{V}_{r+\lambda}(x)$ and $\tilde{V}_{r}(x)$ can be expressed as in \eqref{value}. These inequalities imply that $V_\lambda(x)=g(x)$ for all $x\geq x_{r}^\ast$ and $V_\lambda(x)>g(x)$ for all $x\leq x_{r+\lambda}^\ast$. Given these observations, we now plan to establish that our assumptions are sufficient for guaranteeing the existence of a unique exercise threshold $x^\ast\in [x_{r+\lambda}^\ast, x_{r}^\ast]$ maximizing the ratio $g(x)/\psi_\lambda(x)$. As is clear from the proof of Corollary \ref{dominationcorollary}, the ratio $g(x)/\psi_\lambda(x)$ is decreasing on the set where the ratio $g(x)/\tilde{\psi}_r(x)$ is decreasing and increasing on the set where
the ratio $g(x)/\tilde{\psi}_{r+\lambda}(x)$ is increasing. Consequently, we observe that $g(x)/\psi_\lambda(x)$ has at least one maximum point on $[x_{r+\lambda}^\ast,x_{r}^\ast]$. Denote now as $\Lambda$ the set of maximum points of $g(x)/\psi_\lambda(x)$ and let $x' = \max\{x: x\in \Lambda\}$ denote the maximal element of that set. Consider now the function
$$
\check{V}(x)=\begin{cases}
g(x) &x\geq x'\\
g(x')\frac{\psi_\lambda(x)}{\psi_\lambda(x')} & x < x'.
\end{cases}
$$
Since $$\check{V}(x) = \mathbb{E}_x\left[e^{-r\tau_{x'}}g(X_{\tau_{x'}})\right],$$ where $\tau_{x'} = \inf\{t\geq 0:X_t\geq x'\}$ is an admissible stopping strategy, we notice that $\check{V}(x)\leq V_\lambda(x)$. On the other hand, it is also clear that $\check{V}(x)$ is nonnegative, continuous, dominates the exercise payoff $g(x)$, and belongs to $C^1(\mathcal{I}\backslash \mathcal{N})\cap C^2(\mathcal{I}\backslash \mathcal{N})$. Moreover, since $(\mathcal{G}_r\check{V})(x)=0$ on $(0,x')$ it is sufficient to analyze the behavior of $(\mathcal{G}_r\check{V})(x)$ on $(x',b)\backslash \mathcal{N}$. Since $V_\lambda(x) = \tilde{V}_r(x)=g(x)$ on $[x_r^\ast,b)$ by Lemma \ref{domrexcessive} and
$$
(\mathcal{G}_rg)(x) = (\tilde{\mathcal{A}}_rg)(x)+\lambda\int_{\mathcal{S}}(g(x+\gamma(x,z))-g(x))\mathfrak{m}(dz) \leq (\tilde{\mathcal{A}}_rg)(x) \leq 0
$$
for all $x\in[\hat{x}_r,b)\backslash \mathcal{N}$ we observe that if $x'\geq \hat{x}_r$ then $(\mathcal{G}_r\check{V})(x)\leq 0$ on $(x',b)\backslash \mathcal{N}$ and we are done. Assume, therefore, that $x' <\hat{x}_r$ and define the continuous mapping $\check{g}(x) = g(x)\mathbf{1}_{[x_0,b)}(x)+(g(x)+\delta(x-x_0)) \mathbf{1}_{(a,x_0)}(x)$, where $\delta>0$ is a known positive constant. It is clear that $\check{g}(x) \leq g(x)$ for all $x\in (a,b)$ and $\max(\check{g}(x),0)=\max(g(x),0)$. Consequently, the value of the optimal stopping problem defined with respect to the exercise payoff $\check{g}(x)$ coincides with the value $V_\lambda(x)$ (since $V_\lambda(x)$ is $r$-excessive and dominates $g^+(x)$). Assumption $\mathfrak{m}(\{z\in \mathcal{S}:x+\gamma(x,z)< x_0\}) > 0$ for all $x\in [x_{r+\lambda}^\ast,x_{r}^\ast]$ now implies that we can always choose the parameter $\delta$ so that
$$
\sup_{x\in [x_{r+\lambda}^\ast,x_{r}^\ast]}\int_{\mathcal{S}}\check{g}(x+\gamma(x,z))\mathfrak{m}(dz) < 0.
$$
This inequality guarantees that for all $x\in [x',\hat{x}_r)\backslash \mathcal{N}$ we have
$$
0 \geq (\tilde{\mathcal{A}}_{r+\lambda}\check{g})(x)\geq (\tilde{\mathcal{A}}_{r+\lambda}\check{g})(x) +\lambda \int_{\mathcal{S}}\check{g}(x+\gamma(x,z))\mathfrak{m}(dz) = (\mathcal{G}_r\check{g})(x).
$$
Combining these inequalities with the sufficient smoothness of $\check{V}$ and the technique based on a sequence of smooth functions converging uniformly to the value $\check{V}$ applied in the proof of Lemma \ref{l1} then imply that $x^\ast=x'$ and $\check{V}(x)\geq V_\lambda(x)$ (since $V$ is the smallest $r$-excessive majorant of $g$ for $X$).
\end{proof}

Theorem \ref{mainthm} demonstrate that the sufficient conditions guaranteeing the existence of an optimal threshold in the continuous diffusion setting
are sufficient in the jump diffusion setting as well provided that the support of the jump size distribution is sufficiently extensive. An important implication of  Theorem \ref{mainthm} is summarized in the following:
\begin{corollary}\label{discountcor}
Assume that the conditions of Theorem \ref{mainthm} are satisfied. Then, there exists a jump risk adjusted discount rate $\theta^\ast\in [r, r+\lambda]$ so that
$x_{\theta^\ast}^\ast=x^\ast$, that is, so that the optimal exercise threshold in the absence of jumps coincides with the one in the presence of jumps.
\end{corollary}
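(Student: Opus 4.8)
The plan is to regard the continuous-diffusion exercise threshold as a function of the discount rate and to finish by an intermediate value argument. Write $F(\theta):=x_\theta^\ast=\argmax\{g(x)/\tilde{\psi}_\theta(x)\}$ for $\theta\in[r,r+\lambda]$; by Lemma \ref{l1} this threshold exists and is unique under the maintained assumptions. Combining Lemma \ref{domrexcessive}, Lemma \ref{l1} and the conclusion of Theorem \ref{mainthm}, the jump-diffusion threshold satisfies $x^\ast\in[x_{r+\lambda}^\ast,x_r^\ast]$, that is, $F(r+\lambda)\leq x^\ast\leq F(r)$. Hence it suffices to show that $F$ is continuous on $[r,r+\lambda]$, since the intermediate value theorem then yields a $\theta^\ast\in[r,r+\lambda]$ with $F(\theta^\ast)=x^\ast$, which is exactly the asserted identity $x_{\theta^\ast}^\ast=x^\ast$.

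First I would record a monotonicity fact that both localizes the thresholds and pins down the structure. For $r\leq\theta_1\leq\theta_2\leq r+\lambda$ the quantity $\mathbb{E}_{x}[e^{-\theta\tilde{\tau}_{(a,y)}}]=\tilde{\psi}_\theta(x)/\tilde{\psi}_\theta(y)$ is nonincreasing in $\theta$ for all $x\leq y$, so the very argument used in the proof of Corollary \ref{dominationcorollary} (mean value theorem for integrals, then $x\uparrow y$) gives the ordering $\tilde{\psi}_{\theta_2}'(x)/\tilde{\psi}_{\theta_2}(x)\geq\tilde{\psi}_{\theta_1}'(x)/\tilde{\psi}_{\theta_1}(x)$. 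Since $x_\theta^\ast$ is the point where $g'(x)/g(x)=\tilde{\psi}_\theta'(x)/\tilde{\psi}_\theta(x)$ and the map $x\mapsto g'(x)/g(x)-\tilde{\psi}_\theta'(x)/\tilde{\psi}_\theta(x)$ is decreasing through this zero (the second-order condition for the maximum of $g/\tilde{\psi}_\theta$), raising $\theta$ lowers the left-hand side pointwise and therefore moves the zero to the left. Thus $F$ is nonincreasing and takes values in the fixed compact interval $[x_{r+\lambda}^\ast,x_r^\ast]\subset\mathcal{I}$.

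It remains to upgrade monotonicity to continuity. The increasing fundamental solution $\tilde{\psi}_\theta$ of $(\tilde{\mathcal{A}}_\theta u)(x)=0$ depends continuously (indeed analytically) on the killing rate $\theta$, uniformly with its first derivative on compact subsets of $\mathcal{I}$; this is the classical smooth dependence of the minimal increasing solution of a regular second order linear equation on the spectral parameter (cf. \citet{BorandSalm}, pp.~18--19). Consequently $(\theta,x)\mapsto g(x)/\tilde{\psi}_\theta(x)$ is jointly continuous, and by the previous paragraph its maximizer over $x$ stays inside a common compact set as $\theta$ ranges over $[r,r+\lambda]$. With joint continuity of the objective, compactness of the effective maximization domain, and uniqueness of the maximizer from Lemma \ref{l1}, Berge's maximum theorem shows that the single-valued argmax map $\theta\mapsto F(\theta)$ is continuous. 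Equivalently, $F(\theta)$ is the unique sign-changing zero of the jointly continuous map $x\mapsto(\mathcal{L}_{\tilde{\psi}_\theta} g)(x)$, which is nonincreasing in $x$ on $(\hat{x}_\theta,b)$, and a unique transversal zero of a jointly continuous function depends continuously on the parameter. Either route gives continuity of $F$, and the intermediate value theorem produces the required $\theta^\ast$.

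The main obstacle is precisely this passage from monotonicity to \emph{continuity}: a merely monotone selection could in principle jump over the target value $x^\ast$, so the argument genuinely needs the continuous dependence of $\tilde{\psi}_\theta$ on $\theta$ together with the sign-change (transversality) property of $(\mathcal{L}_{\tilde{\psi}_\theta} g)$ at $x_\theta^\ast$. The only delicate point is the possibly degenerate configuration $x_\theta^\ast=\hat{x}_\theta$, where the zero need not be a point of nonvanishing derivative; there one must read off continuity of the zero from its sign-change characterization rather than from an implicit-function argument, but the uniqueness of the maximizer supplied by Lemma \ref{l1} renders this routine.
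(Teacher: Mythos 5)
Your proposal is correct and follows essentially the same route as the paper: locate $x^\ast$ in $[x_{r+\lambda}^\ast,x_r^\ast]$ via Theorem \ref{mainthm} and then apply the intermediate value theorem to $\theta\mapsto x_\theta^\ast$. The only difference is that the paper simply asserts the continuity and monotonicity of $x_\theta^\ast$ in $\theta$, whereas you supply the supporting arguments (the log-derivative ordering for monotonicity, and continuous dependence of $\tilde{\psi}_\theta$ on $\theta$ together with uniqueness of the maximizer for continuity), which is a useful elaboration rather than a different approach.
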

\begin{proof}
As we know from Theorem \ref{mainthm}, $x^\ast\in [x_{r+\lambda}^\ast,x_r^\ast]$. However, since $x_\theta^\ast$ is continuous and monotonically decreasing as a function of the prevailing discount rate $\theta$, we find that $x_\theta^\ast=x^\ast$ has a unique root in $[r, r+\lambda]$.
\end{proof}

According to Corollary \ref{discountcor} the optimal exercise boundary $x^\ast$ can be attained in the continuous diffusion setting by adjusting the discount rate appropriately for the jump risk. It is worth noticing that since $x^\ast > x_{r+\lambda}^\ast$ the same conclusion can be drawn by adjusting the growth rate $\tilde{\alpha}(x)$ appropriately as well. We will illustrate this observation in our explicit examples.

\section{Comparative Statics}

In this section our main objective is to consider comparative static
properties of the value function and the optimal policy and,
especially, to analyze the impact of increased volatility on these
factors. To this end, we consider two jump diffusions of the form
\eqref{jumpdiffusion}, $X$ and $\hat{X}$, which are otherwise
identical but have different volatilities,
$\sigma(x)>\hat{\sigma}(x)$. In accordance with this notation, we
denote the values of the associated optimal stopping problems by $V_\lambda$
and $\hat{V}_\lambda$, the associated integro-differential operators as
$\mathcal{G}_r$ and $\mathcal{\hat{G}}_r$, and the associated
increasing fundamental solutions (given that assumption ({\bf A1}) is
satisfied) as $\psi_\lambda$ and $\hat{\psi}_\lambda$, respectively. Our first
result emphasizing the role of these fundamental solutions is now
summarized in the next theorem.
\begin{theorem}\label{incvol}
Assume that the increasing fundamental solution $\psi_\lambda(x)$ is convex.
Then
$$
\frac{\hat{\psi}_\lambda(x)}{\hat{\psi}_\lambda(y)}\leq\frac{\psi_\lambda(x)}{\psi_\lambda(y)}\quad \textrm{and } \quad \frac{\hat{\psi}_\lambda'(x)}{\hat{\psi}_\lambda(x)}\geq\frac{\psi_\lambda'(x)}{\psi_\lambda(x)}
$$
for all $x\leq y$.  Moreover, if the conditions of
Theorem \ref{mainthm} are satisfied, then $V_\lambda(x)\geq
\hat{V}_\lambda(x)$ and, therefore,
$$
\hat{C}=\{x\in \mathcal{I}: \hat{V}_\lambda(x) > g(x)\}\subseteq \{x\in \mathcal{I}: V_\lambda(x) >
g(x)\}=C.
$$
If the increasing fundamental solution $\hat{\psi}_\lambda(x)$ is concave,
then the inequalities and inclusions stated above are reversed.
\end{theorem}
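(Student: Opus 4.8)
The plan is to reduce both the ratio inequality and the logarithmic-derivative inequality to a single subharmonicity comparison, exactly along the lines of the Dynkin argument used in Lemma \ref{exittime}, and then to deduce the ordering of the values from the nonlinear-programming representation of Theorem \ref{mainthm}. The crucial algebraic observation is that the two integro-differential operators $\mathcal{G}_r$ and $\hat{\mathcal{G}}_r$ differ only through their diffusion coefficients, since the drift $\tilde{\alpha}$, the jump coefficient $\gamma$, the intensity $\lambda$, and the measure $\mathfrak{m}$ are shared by $X$ and $\hat{X}$. Consequently, applying $\hat{\mathcal{G}}_r$ to the fundamental solution $\psi_\lambda$ of $\mathcal{G}_r\psi_\lambda=0$ yields, after the drift and jump terms cancel,
$$(\hat{\mathcal{G}}_r\psi_\lambda)(x) = (\hat{\mathcal{G}}_r\psi_\lambda)(x) - (\mathcal{G}_r\psi_\lambda)(x) = \tfrac{1}{2}\bigl(\hat{\sigma}^2(x)-\sigma^2(x)\bigr)\psi_\lambda''(x).$$
Because $\psi_\lambda$ is assumed convex ($\psi_\lambda''\geq 0$) and $\hat{\sigma}(x)<\sigma(x)$, this shows $(\hat{\mathcal{G}}_r\psi_\lambda)(x)\leq 0$, i.e.\ $\psi_\lambda$ is $\hat{\mathcal{G}}_r$-subharmonic.

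First I would apply Dynkin's formula (the It\^o--Doeblin theorem) to $(t,x)\mapsto e^{-rt}\psi_\lambda(x)$ under the $\hat{X}$-dynamics and the exit time $\tau_{(a,y)}$, exactly as in the proof of Lemma \ref{exittime}, localizing by $\tau_{(a,y)}\wedge n$ if necessary. Since $\hat{X}$ has no positive jumps and cannot reach $a$, we have $\hat{X}_{\tau_{(a,y)}}=y$ a.s., and the subharmonicity $(\hat{\mathcal{G}}_r\psi_\lambda)\leq 0$ gives
$$\psi_\lambda(y)\,\mathbb{E}_x\bigl[e^{-r\tau_{(a,y)}}\bigr]\leq \psi_\lambda(x),\qquad x\leq y.$$
Combining this with the identity $\mathbb{E}_x[e^{-r\tau_{(a,y)}}]=\hat{\psi}_\lambda(x)/\hat{\psi}_\lambda(y)$ supplied by Lemma \ref{exittime} for $\hat{X}$ yields the first inequality $\hat{\psi}_\lambda(x)/\hat{\psi}_\lambda(y)\leq\psi_\lambda(x)/\psi_\lambda(y)$. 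The logarithmic-derivative inequality then follows verbatim from the argument in Corollary \ref{dominationcorollary}: rewriting the ratio inequality as $\int_x^y \hat{\psi}_\lambda'(t)/\hat{\psi}_\lambda(y)\,dt\geq \int_x^y \psi_\lambda'(t)/\psi_\lambda(y)\,dt$, invoking the mean value theorem for integrals, and letting $x\uparrow y$.

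For the value comparison I would invoke Theorem \ref{mainthm} for both processes, so that $V_\lambda(x)=\psi_\lambda(x)\sup_{y\geq x}g(y)/\psi_\lambda(y)$ and $\hat{V}_\lambda(x)=\hat{\psi}_\lambda(x)\sup_{y\geq x}g(y)/\hat{\psi}_\lambda(y)$ with optimal thresholds $x^\ast,\hat{x}^\ast>x_0$. If $x\geq \hat{x}^\ast$ then $\hat{V}_\lambda(x)=g(x)\leq V_\lambda(x)$ since $V_\lambda$ majorizes $g$. If $x<\hat{x}^\ast$, then $\hat{x}^\ast$ is admissible in the supremum defining $V_\lambda(x)$, so using $g(\hat{x}^\ast)>0$ together with the already-proven ratio inequality,
$$V_\lambda(x)\geq \psi_\lambda(x)\frac{g(\hat{x}^\ast)}{\psi_\lambda(\hat{x}^\ast)}\geq \hat{\psi}_\lambda(x)\frac{g(\hat{x}^\ast)}{\hat{\psi}_\lambda(\hat{x}^\ast)}=\hat{V}_\lambda(x).$$
The inclusion $\hat{C}\subseteq C$ is immediate from $V_\lambda\geq \hat{V}_\lambda$. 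The reversed statement under concavity of $\hat{\psi}_\lambda$ follows by the symmetric computation: with $\hat{\psi}_\lambda''\leq 0$ and $\sigma>\hat{\sigma}$ one has $(\mathcal{G}_r\hat{\psi}_\lambda)=\tfrac{1}{2}(\sigma^2-\hat{\sigma}^2)\hat{\psi}_\lambda''\leq 0$, so the same Dynkin argument applied to $\hat{\psi}_\lambda$ under the $X$-dynamics reverses every inequality and the inclusion.

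The main obstacle I anticipate is the analytic justification of the Dynkin step for the merely $C^2$ (and not compactly supported) function $\psi_\lambda$ applied to the jump diffusion $\hat{X}$: one must localize with the exit times $\tau_{(a,y)}\wedge n$, control the jump integral term, and pass to the limit, precisely as is tacitly done in Lemma \ref{exittime}. A secondary point requiring care is the sign bookkeeping in the value comparison, since the ratio inequality only transfers to the reward ratios where $g\geq 0$; this is why it is essential that the maximizer $\hat{x}^\ast$ lies in $(x_0,b)$, a fact guaranteed by Lemma \ref{l1}.
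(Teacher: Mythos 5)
Your proposal is correct and follows essentially the same route as the paper: the identity $(\hat{\mathcal{G}}_r\psi_\lambda)(x)=\tfrac{1}{2}(\hat{\sigma}^2(x)-\sigma^2(x))\psi_\lambda''(x)\leq 0$, the Dynkin/exit-time argument giving the ratio inequality, and the reduction of the logarithmic-derivative inequality to the argument of Corollary \ref{dominationcorollary}. The only difference is that you spell out the value-comparison step (via the representation of Theorem \ref{mainthm} and the admissibility of the threshold $\hat{x}^\ast$), which the paper leaves implicit; your case split and sign bookkeeping there are correct.
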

\begin{proof}
Applying Dynkin's formula to
$(t,x)\mapsto e^{-rt}\psi_\lambda(x)$ yields
\begin{eqnarray*}
\mathbb{E}_{x}[e^{-r \hat{\tau}_{(a,y)}}
\psi_\lambda(\hat{X}_{\hat{\tau}_{(a,y)}})]=\psi_\lambda(x)+\mathbb{E}_{x}\int_{0}^{\hat{\tau}_{(a,y)}}e^{-rt}
(\mathcal{\hat{G}}_{r}\psi_\lambda)(\hat{X}_{t})dt,
\end{eqnarray*}
where $\hat{\tau}_{(a,y)}=\inf\{t\geq 0:\hat{X}_t\geq y\}$. Since
$\hat{X}_{\hat{\tau}_{(a,y)}} = y$ a.s. and
$(\mathcal{\hat{G}}_{r}\psi_\lambda)(x) =
\left((\mathcal{\hat{G}}_{r}-\mathcal{G}_r+\mathcal{G}_r)\psi_\lambda\right)(x)
=
\left((\mathcal{\hat{G}}_{r}-\mathcal{G}_r)\psi_\lambda\right)(x)=\frac{1}{2}(\hat{\sigma}^2(x)-\sigma^2(x))\psi_\lambda''(x)\leq
0$ by the $X$-harmonicity and convexity of $\psi_\lambda(x)$, we find
that
$$
\mathbb{E}_{x}[e^{-r \hat{\tau}_{(a,y)}}]\psi_\lambda(y) =
\frac{\hat{\psi}_\lambda(x)}{\hat{\psi}_\lambda(y)}\psi_\lambda(y) \leq\psi_\lambda(x).
$$
The second inequality can be established as in Corollary \ref{dominationcorollary}.
Establishing the reverse conclusions in case the
fundamental solution  $\psi_\lambda(x)$ is concave is completely analogous.
\end{proof}

Theorem \ref{incvol} extends previous findings based on continuous
diffusions to the present setting as well and states a set of
conditions in terms of the convexity (concavity) of the fundamental
solution $\psi_\lambda(x)$ under which increased volatility unambiguously
decelerates (accelerates) rational exercise by expanding (shrinking)
the continuation region where waiting is optimal. As is clear from
this observation, the sign of the relationship between increased
volatility and the optimal stopping policy is a process-specific
property that as such does not depend on the precise form of the
exercise payoff as long as the supremum at which the expected
present value of the payoff is maximized exists and constitutes the
optimal stopping rule.

It is worth noticing that the proof of our Theorem \ref{incvol}
indicates that the analysis of the impact of increased volatility on
the optimal policy and its value reduces to the comparison of the
$r$-superharmonic mappings characterized by the integro-differential
operators $\mathcal{G}_r$ and $\hat{\mathcal{G}}_r$. Since
$(\mathcal{\hat{G}}_ru)(x) \leq (\mathcal{G}_ru)(x)$ for any
sufficiently smooth convex function $u:\mathcal{I}\mapsto \mathbb{R}_+$ and
$(\mathcal{\hat{G}}_rv)(x) \geq (\mathcal{G}_rv)(x)$ for any
sufficiently smooth concave function $v:\mathcal{I}\mapsto \mathbb{R}_+$, we
find that the findings of our Theorem \ref{incvol} generate a
natural ordering for the convex (concave) solutions of the
variational inequalities $\max\{(\mathcal{\hat{G}}_ru)(x),
g(x)-u(x)\}=0$ and $\max\{(\mathcal{G}_ru)(x),g(x)-u(x)\}=0$.

Having characterized the impact of increased volatility on the
optimal policy and its value, it is naturally of interest to analyze
how the jump-intensity $\lambda$ measuring the rate at which the
downside risk is realized affects these factors. Along the lines of
our previous notation, we now consider two jump diffusions of the
form \eqref{jumpdiffusion}, $X$ and $\hat{X}$, which are otherwise
identical but are subject to different jump intensities, $\lambda
> \hat{\lambda}$. In line with this notation, we denote the associated integro-differential operators by
$\mathcal{G}_r$ and $\mathcal{\hat{G}}_r$, respectively. Our main characterization on the impact of increased
jump intensity on the value and the optimal policy is now summarized
in our next theorem.
\begin{theorem}\label{incintensity}
Assume that the increasing fundamental solution $\psi_\lambda(x)$ is
convex. Then
$$
\frac{\psi_{\hat{\lambda}}(x)}{\psi_{\hat{\lambda}}(y)}\leq\frac{\psi_\lambda(x)}{\psi_\lambda(y)}\quad \textrm{and }\quad
\frac{\psi_{\hat{\lambda}}'(x)}{\psi_{\hat{\lambda}}(x)}\geq\frac{\psi_\lambda'(x)}{\psi_\lambda(x)}
$$
for all $x\leq y$. Moreover, if the conditions of
Theorem \ref{mainthm} are satisfied, then
$V_\lambda(x)\geq V_{\hat{\lambda}}(x)$ and, therefore,
$$
C_{\hat{\lambda}}=\{x\in \mathcal{I}: V_{\hat{\lambda}}(x) > g(x)\}\subseteq
\{x\in \mathcal{I}: V_\lambda(x)
> g(x)\}=C_\lambda.
$$
If the increasing fundamental solution $\psi_{\hat{\lambda}}(x)$ is
concave, then the inequalities and inclusions stated above are
reversed.
\end{theorem}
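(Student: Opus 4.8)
The plan is to mirror the proof of Theorem~\ref{incvol}, replacing the volatility perturbation by the intensity perturbation. Writing $X$ for the jump diffusion with intensity $\lambda$ and $\hat{X}$ for the one with intensity $\hat\lambda<\lambda$, I would first apply Dynkin's formula to the mapping $(t,x)\mapsto e^{-rt}\psi_\lambda(x)$ along the lower-intensity process $\hat X$ up to its first exit time $\hat\tau_{(a,y)}=\inf\{t\ge 0:\hat X_t\notin(a,y)\}$, obtaining
$$
\mathbb{E}_x\!\left[e^{-r\hat\tau_{(a,y)}}\psi_\lambda(\hat X_{\hat\tau_{(a,y)}})\right]=\psi_\lambda(x)+\mathbb{E}_x\!\int_0^{\hat\tau_{(a,y)}}e^{-rt}(\hat{\mathcal{G}}_r\psi_\lambda)(\hat X_t)\,dt .
$$
Since $\hat X$ has no positive jumps and never attains $a$, we again have $\hat X_{\hat\tau_{(a,y)}}=y$ almost surely, so the whole argument reduces to controlling the sign of $(\hat{\mathcal{G}}_r\psi_\lambda)(x)$.

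This sign is the crux of the proof. Because $X$ and $\hat X$ share the same coefficients $\sigma$ and $\alpha$ and differ only in the L\'evy intensity, the operator difference $\hat{\mathcal{G}}_r-\mathcal{G}_r$ acts solely through the compensated jump integral. Using that $\psi_\lambda$ is $X$-harmonic, i.e. $(\mathcal{G}_r\psi_\lambda)(x)=0$ by assumption ({\bf A1}), I would compute
$$
(\hat{\mathcal{G}}_r\psi_\lambda)(x)=\big((\hat{\mathcal{G}}_r-\mathcal{G}_r)\psi_\lambda\big)(x)=(\hat\lambda-\lambda)\int_{\mathcal{S}}\big\{\psi_\lambda(x+\gamma(x,z))-\psi_\lambda(x)-\psi_\lambda'(x)\gamma(x,z)\big\}\mathfrak{m}(dz).
$$
Here the prefactor $\hat\lambda-\lambda$ is strictly negative, while convexity of $\psi_\lambda$ forces the integrand --- the first-order Taylor remainder of $\psi_\lambda$ at $x$ evaluated at the displacement $\gamma(x,z)$ --- to be nonnegative for every admissible jump size. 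Hence $(\hat{\mathcal{G}}_r\psi_\lambda)(x)\le 0$ throughout $\mathcal{I}$, and the Dynkin identity yields $\psi_\lambda(y)\,\mathbb{E}_x[e^{-r\hat\tau_{(a,y)}}]\le\psi_\lambda(x)$. Invoking Lemma~\ref{exittime} for $\hat X$ to identify $\mathbb{E}_x[e^{-r\hat\tau_{(a,y)}}]=\psi_{\hat\lambda}(x)/\psi_{\hat\lambda}(y)$ then gives the first claimed inequality $\psi_{\hat\lambda}(x)/\psi_{\hat\lambda}(y)\le\psi_\lambda(x)/\psi_\lambda(y)$ for all $x\le y$. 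The logarithmic-derivative inequality follows from this exactly as in Corollary~\ref{dominationcorollary}: rewrite the ratio inequality as an inequality between $\int_x^y\psi_{\hat\lambda}'(t)/\psi_{\hat\lambda}(y)\,dt$ and $\int_x^y\psi_\lambda'(t)/\psi_\lambda(y)\,dt$, apply the mean value theorem for integrals, and let $x\uparrow y$.

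For the value comparison, under the conditions of Theorem~\ref{mainthm} both problems admit the threshold representation, so $V_\lambda(x)=\psi_\lambda(x)\sup_{y\ge x}g(y)/\psi_\lambda(y)$ and likewise for $V_{\hat\lambda}$ with maximizer $x^\ast_{\hat\lambda}>x_0$. Fixing $x$ and letting $y^\ast\ge x$ denote the point attaining $\sup_{y\ge x}g(y)/\psi_{\hat\lambda}(y)$, I would use $y^\ast$ as a (generally suboptimal) admissible threshold in the $\lambda$-problem to get $V_\lambda(x)\ge g(y^\ast)\psi_\lambda(x)/\psi_\lambda(y^\ast)$. Since $g(y^\ast)>0$ and the first inequality gives $\psi_\lambda(x)/\psi_\lambda(y^\ast)\ge\psi_{\hat\lambda}(x)/\psi_{\hat\lambda}(y^\ast)$ for $x\le y^\ast$, this lower bound dominates $g(y^\ast)\psi_{\hat\lambda}(x)/\psi_{\hat\lambda}(y^\ast)=V_{\hat\lambda}(x)$, proving $V_\lambda\ge V_{\hat\lambda}$; the inclusion $C_{\hat\lambda}\subseteq C_\lambda$ is then immediate since $V_\lambda(x)\ge V_{\hat\lambda}(x)>g(x)$ on $C_{\hat\lambda}$. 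The reverse statements under concavity of $\psi_{\hat\lambda}$ follow by reversing every inequality, the Taylor remainder now being nonpositive.

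The main obstacle I anticipate is the second paragraph: one must correctly isolate that only the jump term survives in $\hat{\mathcal{G}}_r-\mathcal{G}_r$ (so that the shared diffusion and drift coefficients cancel), and then recognize that convexity of $\psi_\lambda$ controls the sign of the \emph{integrated Taylor remainder} $\psi_\lambda(x+\gamma)-\psi_\lambda(x)-\psi_\lambda'(x)\gamma$ rather than of $\psi_\lambda''$ directly, as was the case in Theorem~\ref{incvol}. One should also check the integrability needed to justify Dynkin's formula and the finiteness of the compensated integral, but these are inherited from the standing assumptions on $X$ and the smoothness of $\psi_\lambda$.
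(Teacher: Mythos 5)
Your proposal is correct and follows essentially the same route as the paper: the key step in both is the identity $(\hat{\mathcal{G}}_r\psi_\lambda)(x)=(\hat\lambda-\lambda)\int_{\mathcal{S}}\{\psi_\lambda(x+\gamma(x,z))-\psi_\lambda(x)-\psi_\lambda'(x)\gamma(x,z)\}\mathfrak{m}(dz)$, whose sign is controlled by convexity of $\psi_\lambda$ and the negative prefactor, followed by Dynkin's formula along $\hat X$ and the argument of Corollary \ref{dominationcorollary} for the logarithmic derivatives. The paper leaves the value comparison as ``completely analogous''; your explicit suboptimal-threshold argument fills that in correctly.
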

\begin{proof}
The assumed convexity of the increasing fundamental solution
$\psi_\lambda(x)$ implies that $\psi_\lambda(x + \gamma(x,z))\geq
\psi_\lambda(x) + \psi_\lambda'(x)\gamma(x,z)$ for any $z\in
\mathcal{S}$ and, therefore, that
$$
\int_{\mathcal{S}}\{\psi_\lambda(x + \gamma(x,z))- \psi_\lambda(x) -
\psi_\lambda'(x)\gamma(x,z)\}\mathfrak{m}(dz)>0.
$$
Consequently, we observe that
$$
(\mathcal{\hat{G}}_r\psi_\lambda)(x) = (\hat{\lambda} -
\lambda)\int_{\mathcal{S}}\{\psi_\lambda(x + \gamma(x,z))-
\psi_\lambda(x) - \psi_\lambda'(x)\gamma(x,z)\}\mathfrak{m}(dz) < 0
$$
for all $x\in \mathcal{I}$. Applying now Dynkin's theorem to $\psi_\lambda(x)$
then finally proves that
$\psi_{\hat{\lambda}}(x)/\psi_{\hat{\lambda}}(y)\leq
\psi_\lambda(x)/\psi_\lambda(y)$ for $x\leq y$. Establishing the rest of the
alleged results is completely analogous with the proof of Theorem \ref{impactofvol}.
\end{proof}

Theorem \ref{incintensity} characterizes how the direction of the
impact of increased jump-intensity $\lambda$ on the optimal stopping
policy and its value can be unambiguously determined when the
fundamental solution is convex (concave). Along the lines of our
findings on the impact of increased volatility, we observe that
higher jump-intensity also slows down (speeds up) rational exercise
by expanding (shrinking) the continuation region when $\psi(x)$ is
convex (concave). This result is economically important, since it
essentially states that if the value is convex on the continuation
region where exercising is suboptimal, then the combined impact of
downside risk and systematic market risk on the exercise incentives
of rational investors is unambiguously negative.

We  next state a set of sufficient conditions for the convexity of the value (and $\psi_\lambda(x)$) when
the underlying process is the slightly less general
\begin{eqnarray*}
X_{t}= \int_0^t\alpha(X_{s}) ds + \int_0^t\sigma X_{s} d W_{s} +
\int_0^t\int_{\mathcal{S}} \gamma(z)X_{s} \tilde{N}(dz,ds).
\end{eqnarray*}
In this setting we can state
the following sufficient conditions for the convexity of the value function.
\begin{theorem}\label{impactofvol}
Suppose that $g$ and $\alpha$ are convex functions, that $\alpha(x)$
has a locally Lipschitz continuous derivative, and that
$r x-\alpha(x)$ is increasing. Then the value function of the stopping problem is
convex.
\end{theorem}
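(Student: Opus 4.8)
The plan is to prove convexity pathwise at the level of the stochastic flow and then transfer it to the value by the elementary stability of convexity under nonnegative averaging and suprema. Write $X_t^x$ for the solution of the (now multiplicative) dynamics started at $X_0=x$, and recall that since $\mathbb{F}$ is the completed natural filtration of the driving L\'{e}vy process, every admissible $\tau\in\mathcal{T}$ is independent of the initial state; hence $V_\lambda(x)=\sup_{\tau\in\mathcal{T}}\mathbb{E}[e^{-r\tau}g(X_\tau^x)]$ with one and the same family of stopping times for every $x$. The entire argument then rests on the single claim that, for $\mathbb{P}$-a.e. path and every $t$, the map $x\mapsto X_t^x$ is nondecreasing and convex.

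For that claim I would differentiate the dynamics in the initial condition. The local Lipschitz continuity of $\alpha'$ together with the linearity of the diffusion and jump coefficients in $x$ guarantees that the first variation $Y_t:=\partial_x X_t^x$ and the second variation $Z_t:=\partial_x^2 X_t^x$ exist and solve the linear stochastic differential equations
\begin{align*}
dY_t &= \alpha'(X_{t-})Y_{t-}\,dt + \sigma Y_{t-}\,dW_t + \int_{\mathcal S}\gamma(z)Y_{t-}\,\tilde N(dz,dt),\quad Y_0=1,\\
dZ_t &= \big[\alpha''(X_{t-})Y_{t-}^2 + \alpha'(X_{t-})Z_{t-}\big]dt + \sigma Z_{t-}\,dW_t + \int_{\mathcal S}\gamma(z)Z_{t-}\,\tilde N(dz,dt),\quad Z_0=0.
\end{align*}
The equation for $Y$ is homogeneous linear, so $Y$ is a stochastic (Doleans-Dade) exponential; because the multiplicative jumps satisfy $1+\gamma(z)>0$ (the no-downward-overshoot condition $a-x<\gamma(x,z)$ read off for the natural lower boundary $a=0$ of the multiplicative model), one has $Y_t>0$, whence $x\mapsto X_t^x$ is strictly increasing. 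The equation for $Z$ is the same homogeneous linear equation forced by the continuous finite-variation term $\alpha''(X_{t-})Y_{t-}^2\,dt$; solving it by variation of constants with fundamental solution $Y$ gives $Z_t=Y_t\int_0^t Y_s\,\alpha''(X_s)\,ds$, which is nonnegative since $Y>0$ and $\alpha''\ge 0$ by convexity of $\alpha$. Hence $x\mapsto X_t^x$ is convex.

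Granting the claim, convexity of $V_\lambda$ follows quickly. Fix $x=\mu x_1+(1-\mu)x_2$ with $\mu\in[0,1]$ and any $\tau\in\mathcal{T}$. Pathwise, flow convexity gives $X_\tau^x\le \mu X_\tau^{x_1}+(1-\mu)X_\tau^{x_2}$, so the (standing) monotonicity of $g$ and then its convexity yield $g(X_\tau^x)\le \mu g(X_\tau^{x_1})+(1-\mu)g(X_\tau^{x_2})$ almost surely. Multiplying by $e^{-r\tau}\ge 0$, taking expectations, and bounding each term by the corresponding value gives $\mathbb{E}[e^{-r\tau}g(X_\tau^x)]\le \mu V_\lambda(x_1)+(1-\mu)V_\lambda(x_2)$; taking the supremum over $\tau$ on the left proves $V_\lambda(x)\le \mu V_\lambda(x_1)+(1-\mu)V_\lambda(x_2)$. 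The same variation calculus applied to the Laplace transform $\mathbb{E}_x[e^{-r\tau_{(a,y)}}]=\psi_\lambda(x)/\psi_\lambda(y)$ of Lemma \ref{exittime} yields the convexity of $\psi_\lambda$ as well.

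The role I expect the remaining hypothesis $rx-\alpha(x)$ increasing (equivalently $\alpha'(x)\le r$) to play is integrability, and this is where the main care is needed. Taking expectations in the $Y$-equation gives $\tfrac{d}{dt}\mathbb{E}[Y_t]=\mathbb{E}[\alpha'(X_t)Y_t]\le r\,\mathbb{E}[Y_t]$, hence the uniform discounted bound $\mathbb{E}[e^{-rt}Y_t]\le 1$; this controls the growth of the discounted functional, keeps $V_\lambda$ finite, and legitimises both differentiating under the expectation and the convex-combination inequality above. The genuine obstacle, however, is the flow-convexity claim itself: under mere local Lipschitzness of $\alpha'$ the second derivative $\alpha''$ exists only almost everywhere, so the $Z$-equation must be read in an approximative sense. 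I would handle this by approximating $\alpha$ by smooth convex drifts $\alpha_n$ with $\alpha_n'\le r$, proving convexity of each flow $X_t^{x,n}$ exactly as above, and passing to the limit using the $L^2$-stability of the solution map and the uniform bound $\mathbb{E}[e^{-rt}Y_t]\le 1$.
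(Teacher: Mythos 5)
Your argument is correct in substance but follows a genuinely different route from the paper. The paper never touches the second variation: it only differentiates the flow once (Protter, Theorem V.40), writes $Q_x(t,x)=\partial_x\mathbb{E}\left[e^{-rt}g(X_t^x)\right]=\mathbb{E}\left[\exp\left(-\int_0^t(r-\alpha'(X_s^x))ds\right)g'(X_t^x)\mathcal{E}_t\right]$ with $\mathcal{E}_t$ a positive factor independent of $x$, and observes that this is nondecreasing in $x$ as a product of nonnegative nondecreasing factors (using convexity of $\alpha$ and of $g$ only through the monotonicity of $\alpha'$ and $g'$ along the increasing flow); it then reaches $V_\lambda$ as the increasing limit of the iterates $V_{k+1}(x)=\sup_{t\geq 0}\mathbb{E}\left[e^{-rt}V_k(X_t^x)\right]$, each of which is increasing and convex by the same computation. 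Your pathwise route -- showing $x\mapsto X_t^x$ is itself convex via the second variation $Z_t=Y_t\int_0^tY_s\alpha''(X_s)ds\geq 0$, composing with the nondecreasing convex $g$, and taking the supremum over the $x$-independent family of $\mathbb{F}$-stopping times directly -- buys you something real: you bypass the iteration $V_k\uparrow V_\lambda$ entirely (a step the paper uses without detailed justification) by exploiting that the filtration is exogenous to the initial state. The price is exactly the one you identify: under merely locally Lipschitz $\alpha'$ the second variation equation only makes sense after smoothing, so your proof needs the approximation-and-passage-to-the-limit step, whereas the paper's first-variation argument needs no regularity on $\alpha$ beyond what Protter's flow theorem already requires. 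Your reading of the hypothesis that $rx-\alpha(x)$ is increasing as an integrability/domination device (giving $\mathbb{E}[e^{-rt}Y_t]\leq 1$) is consistent with how it actually functions in the paper's displayed formula as well. One caveat: your closing parenthetical that the same variation calculus yields convexity of $\psi_\lambda$ does not follow -- $\mathbb{E}_x[e^{-r\tau_{(a,y)}}]=\int_0^\infty re^{-rt}\,\mathbb{P}(M_t^x\geq y)\,dt$ involves indicators of level sets of the (convex) running maximum, and an indicator of a convex increasing function is increasing but not convex in $x$ -- but this side remark is not needed for the statement of the theorem, which concerns only the value function.
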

\begin{proof}
We denote $Y_{t}^{1}:=\frac{\partial X_{t}}{\partial x}$. By virtue
of Theorem V.40 of \citet{protter}, we can differentiate the flow
$X_{t}=X_{t}^{x}$ with respect to the initial state $x$ to obtain
\begin{eqnarray*}
Y_{t}^1= \int_0^t\alpha^{\prime}(X_{s}^x)Y_s^1 ds + \int_0^t\sigma
Y_s^1 d W_{s} + \int_0^t\int_{\mathcal{S}} \gamma(z)Y_s^1
\tilde{N}(dz,ds),
\end{eqnarray*}
which implies that
\begin{eqnarray*}
Y_t^1 = \exp\left(\int_0^t \alpha^{\prime}(X_{s}^x)ds\right)\mathcal{E}_{t}\geq
0,
\end{eqnarray*}
where
\begin{eqnarray*}
\mathcal{E}_{t}=\exp\left(\sigma W_t-\frac{1}{2}\sigma^2
t+\int_{0}^{t}\int_{\mathcal{S}}\ln (1+\gamma(z))N(ds,dz)-\lambda
\overline{\gamma} t\right)
\end{eqnarray*}
is a positive exponential martingale independent of $x$ and
$$
\overline{\gamma}:=\int_{\mathcal{S}}\gamma(z)\mathfrak{m}(dz).
$$
Thus differentiating the mapping
$$
Q(t,x) := \mathbb{E}\left[e^{-rt}g(X_t^x)\right]
$$
with respect to $x$ yields
$$
Q_x(t,x) =
\mathbb{E}\left[\exp\left(-\int_0^t(r-\alpha'(X_{s}^x))ds\right)g'(X_t^x)M_t\right]\geq
0,
$$
which as a function of $x$ is increasing, being under our
assumptions the product of two non-negative and monotonically
increasing functions. Thus $Q(t,x)$ is an increasing and convex
function of $x$. Consequently, all elements of the increasing
sequence $\{V_k(x)\}_{k\in \mathbb{N}}$ defined by
$$
V_0(x) = \sup_{t\geq 0}\mathbb{E}\left[e^{-rt}g(X_t^x)\right]
$$
$$
V_{k+1}(x) = \sup_{t\geq 0}\mathbb{E}\left[e^{-rt}V_k(X_t^x)\right]
$$
are increasing and convex. Furthermore, $V_k(x)\uparrow V_\lambda(x)$. If
$\alpha\in[0,1]$ and $x,y\in \mathcal{I}$, then
\begin{eqnarray*}
\alpha V_\lambda(x) + (1-\alpha)V_\lambda(y) &\geq& \alpha V_k(x) +
(1-\alpha)V_k(y)\\ &\geq&  V_k(\alpha x+(1-\alpha)y)
\end{eqnarray*}
for all $k$. By monotone convergence
$$
\alpha V_\lambda(x) + (1-\alpha)V_\lambda(y)\geq \lim_{k\rightarrow\infty}V_k(\alpha
x+(1-\alpha)y) = V_\lambda(\alpha x+(1-\alpha)y),
$$
which implies the convexity of the value $V_\lambda$.
\end{proof}

Theorem \ref{impactofvol} states a set of conditions under which the
sign of the relationship between increased volatility and the value
of the considered optimal stopping problem is unambiguously
positive. It is worth noticing that along the lines of the findings
by \citet{alvarez4} the monotonicity of
$\mu(x)-rx$ is the key factor determining how higher volatility
affects the optimal policy. The reason for this observation is
naturally the fact that our evaluations are based on the compensated
compound Poisson process (which is a martingale). If this were not
the case, then the local expected behavior of the underlying jump
process would naturally have a constant effect on the monotonicity
requirement stated in Theorem \ref{impactofvol}.

\section{Certainty Equivalent Valuation}

Having developed a set of sufficient conditions under which the the optimal exercise strategy of the considered stopping problem constitutes  a standard single threshold policy, we now proceed in our analysis and investigate along the lines of the study \citet{alvarez5} the following question: {\em Can the value and optimal exercise threshold be expressed as a solution to an associated deterministic timing problem adjusted to the risk generated by the driving Lévy process?} To address this question we now introduce an associated deterministic process labeled as $\hat{X}_t$ evolving according to the dynamics characterized by the ordinary differential equation
\begin{eqnarray}
\hat{X}_t' = \hat{\mu}_\lambda(\hat{X}_t),\quad X_0=x,\label{det1}
\end{eqnarray}
where $\hat{\mu}_\lambda(x)$ is a continuous and nonnegative function specified below. Having presented the associated deterministic dynamics, we now introduce the associated valuation
\begin{eqnarray}
\hat{V}(x) = \sup_{t\geq 0} e^{-rt}g(\hat{X}_t).
\end{eqnarray}
Our main result on certainty equivalent valuation is now summarized in the following.
\begin{theorem}\label{certaintyequiv}
Assume that the conditions of Theorem \ref{mainthm} are satisfied and define the risk adjusted growth rate as
\begin{eqnarray}
\hat{\mu}_\lambda(x)= \frac{r\psi_\lambda(x)}{\psi_\lambda'(x)} =
\alpha(x) + \frac{1}{2}\sigma^2(x)
\frac{\psi_\lambda''(x)}{\psi_\lambda'(x)}.\label{certeqadj}
\end{eqnarray}
Then, $V_\lambda(x) \equiv \hat{V}(x)$. Moreover, if $\psi_\lambda(x)$ is convex (concave), then increased volatility increases (decreases) and increased jump intensity increases (decreases) the risk adjusted growth rate.
\end{theorem}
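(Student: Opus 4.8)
The plan is to exploit the first of the two representations of the risk-adjusted growth rate, $\hat{\mu}_\lambda(x)=r\psi_\lambda(x)/\psi_\lambda'(x)$, which is precisely the form that makes the deterministic flow \eqref{det1} integrable against $\psi_\lambda$. First I would observe that along any trajectory of \eqref{det1} the quantity $e^{-rt}\psi_\lambda(\hat{X}_t)$ is invariant: differentiating and substituting $\hat{X}_t'=\hat{\mu}_\lambda(\hat{X}_t)=r\psi_\lambda(\hat{X}_t)/\psi_\lambda'(\hat{X}_t)$ gives
$$
\frac{d}{dt}\left[e^{-rt}\psi_\lambda(\hat{X}_t)\right] = e^{-rt}\left[-r\psi_\lambda(\hat{X}_t)+\psi_\lambda'(\hat{X}_t)\hat{\mu}_\lambda(\hat{X}_t)\right]=0,
$$
so that $e^{-rt}\psi_\lambda(\hat{X}_t)\equiv\psi_\lambda(x)$. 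Equivalently $\psi_\lambda(\hat{X}_t)=e^{rt}\psi_\lambda(x)$, which, since $\psi_\lambda$ is strictly increasing and hence invertible, both pins the trajectory down explicitly as $\hat{X}_t=\psi_\lambda^{-1}(e^{rt}\psi_\lambda(x))$ (settling existence and uniqueness) and shows $e^{-rt}=\psi_\lambda(x)/\psi_\lambda(\hat{X}_t)$.

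Substituting this identity into the deterministic payoff yields the factorization $e^{-rt}g(\hat{X}_t)=\psi_\lambda(x)\,g(\hat{X}_t)/\psi_\lambda(\hat{X}_t)$, whence $\hat{V}(x)=\psi_\lambda(x)\sup_{t\geq0}\{g(\hat{X}_t)/\psi_\lambda(\hat{X}_t)\}$. The next step is to convert the supremum over time into a supremum over states. Since $\hat{\mu}_\lambda>0$ the flow $t\mapsto\hat{X}_t$ is strictly increasing, starts at $x$, and — because $\psi_\lambda(\hat{X}_t)=e^{rt}\psi_\lambda(x)\to\infty$ while $\psi_\lambda(b-)=\infty$ — sweeps out the whole interval $[x,b)$ without reaching $b$ in finite time. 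Continuity of $y\mapsto g(y)/\psi_\lambda(y)$ then gives $\sup_{t\geq0}g(\hat{X}_t)/\psi_\lambda(\hat{X}_t)=\sup_{y\geq x}g(y)/\psi_\lambda(y)$, so that
$$
\hat{V}(x)=\psi_\lambda(x)\sup_{y\geq x}\frac{g(y)}{\psi_\lambda(y)}=V_\lambda(x)
$$
by the representation of Theorem \ref{mainthm}. The one technical point to justify carefully is $\psi_\lambda(b-)=\infty$, i.e. that the flow does not stall short of $b$; this I would obtain from the natural boundary behavior $\tilde{\psi}_r(b-)=\infty$ together with the sandwiching $\psi_\lambda(x)/\psi_\lambda(y)\leq\tilde{\psi}_r(x)/\tilde{\psi}_r(y)$ of Lemma \ref{domination}, which forces $\psi_\lambda(y)\to\infty$ as $y\uparrow b$.

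For the comparative statics it suffices to rewrite $\hat{\mu}_\lambda(x)=r\bigl/\bigl(\psi_\lambda'(x)/\psi_\lambda(x)\bigr)$, so that $\hat{\mu}_\lambda$ is a strictly decreasing function of the logarithmic derivative $\psi_\lambda'/\psi_\lambda$. When $\psi_\lambda$ is convex, Theorem \ref{incvol} yields $\hat{\psi}_\lambda'/\hat{\psi}_\lambda\geq\psi_\lambda'/\psi_\lambda$ for the lower-volatility process and Theorem \ref{incintensity} yields $\psi_{\hat{\lambda}}'/\psi_{\hat{\lambda}}\geq\psi_\lambda'/\psi_\lambda$ for the lower-intensity process; inverting the monotone map $s\mapsto r/s$ turns both orderings into $\hat{\mu}$ being larger for the higher volatility and for the higher intensity, i.e. increased volatility and increased jump intensity both raise $\hat{\mu}_\lambda$. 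The concave case is identical with every inequality reversed. I expect the invariance computation and the sweeping argument to be the only substantive steps; the comparative statics follow immediately from the already-established orderings of the logarithmic derivatives.
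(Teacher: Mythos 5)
Your proof is correct, and it rests on the same key fact as the paper's --- that the choice $\hat{\mu}_\lambda = r\psi_\lambda/\psi_\lambda'$ forces $e^{-rt}\psi_\lambda(\hat{X}_t)\equiv\psi_\lambda(x)$ along the flow --- but you organize the argument differently. The paper computes $\frac{d}{dt}\bigl[e^{-rt}g(\hat{X}_t)\bigr]=\frac{re^{-rt}}{\psi_\lambda'(\hat{X}_t)}\bigl[g'(\hat{X}_t)\psi_\lambda(\hat{X}_t)-g(\hat{X}_t)\psi_\lambda'(\hat{X}_t)\bigr]$, reads off the sign change at $x^\ast=\argmax\{g/\psi_\lambda\}$, and exhibits the optimal deterministic time $t^\ast=\frac{1}{r}\ln\bigl(\max(\psi_\lambda(x^\ast)/\psi_\lambda(x),1)\bigr)$ explicitly; i.e.\ it is a first-order condition in the time variable. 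You instead use the invariance to substitute $e^{-rt}=\psi_\lambda(x)/\psi_\lambda(\hat{X}_t)$ directly into the payoff and change variables from time to state, reducing $\hat{V}(x)$ to $\psi_\lambda(x)\sup_{y\geq x}\{g(y)/\psi_\lambda(y)\}$ and then invoking Theorem \ref{mainthm}. Your version buys two small things: it never differentiates $g$ along the trajectory (relevant since $g$ is only piecewise $C^1$, with kinks on $\mathcal{N}$, and continuity of $g/\psi_\lambda$ suffices for your step), and it makes explicit the point the paper leaves tacit, namely that the flow sweeps out all of $[x,b)$ without exiting, which you correctly justify via $\psi_\lambda(b-)=\infty$ from the sandwich $\psi_\lambda(x)/\psi_\lambda(y)\leq\tilde{\psi}_r(x)/\tilde{\psi}_r(y)$ and the natural-boundary behavior of $\tilde{\psi}_r$. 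The paper's version buys the closed-form optimal exercise date $t^\ast$, which is part of the statement's content as presented there. The comparative-statics portion of your argument is identical to the paper's: both reduce to the orderings of the logarithmic derivatives established in Theorems \ref{incvol} and \ref{incintensity}.
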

\begin{proof}
Assume that \eqref{certeqadj} is satisfied and consider the mapping $(t,x)\mapsto e^{-rt}g(\hat{X}_t)$. Standard differentiation yields
$$
\frac{d}{dt}\left[e^{-rt}g(\hat{X}_t)\right] = \frac{re^{-rt}}{\psi_\lambda'(\hat{X}_t)}\left[g'(\hat{X}_t)\psi_\lambda(\hat{X}_t)-g(\hat{X}_t)\psi_\lambda'(\hat{X}_t)\right].
$$
As was demonstrated in Theorem \ref{mainthm}, there is a unique threshold $x^\ast = \argmax\{g(x)/\psi_\lambda(x)\}$ for which $g'(x)\psi_\lambda(x)\gtreqqless g(x)\psi_\lambda'(x)$ when $x \lesseqqgtr x^\ast$. Since $\hat{\mu}_\lambda(x)= \frac{r\psi_\lambda(x)}{\psi_\lambda'(x)} > 0$ for all $x\in \mathcal{I}$, we notice that $$t^\ast = \inf\{t\geq 0: \hat{X}_t\geq x^\ast\}=\frac{1}{r}\ln\left(\max\left(\frac{\psi_\lambda(x^\ast)}{\psi_\lambda(x)},1\right)\right)$$ is the optimal stopping time and that $V_\lambda(x) \equiv \hat{V}(x)$. The positivity of the sensitivity of the risk adjusted growth rate with respect to changes in the jump intensity follows from Theorem \ref{incintensity} and  with respect to changes in volatility from Theorem \ref{incvol}.
\end{proof}

Theorem \ref{certaintyequiv} extends the findings of \citet{alvarez5} and shows that the value and optimal exercise boundary of the optimal stopping problem \eqref{stoppingproblem} of a discontinuous jump diffusion coincide with the value and stopping boundary of a stopping problem of a continuous and deterministic process. According to Theorem \ref{certaintyequiv}, this identity can be attained by adjusting appropriately the growth rate of the deterministic dynamics to the uncertainty generated by the driving Lévy process. Since this adjustment can be made for the continuous diffusion model as well, \eqref{certeqadj} can be applied in decomposing the risk adjusted growth rate into two parts capturing the uncertainty of the driving stochastic processes. More precisely,
in the absence of jumps (i.e. when $\lambda\equiv 0$) the risk adjusted growth rate reads as
$$
\hat{\mu}_0(x)= \frac{r\psi_0(x)}{\psi_0'(x)}.
$$
Consequently, if the increasing fundamental solution $\psi_\lambda(x)$ is convex, then Theorem \ref{incintensity} implies that
$$
\hat{\mu}_\lambda(x)-\hat{\mu}_0(x)= r\left(\frac{\psi_\lambda(x)}{\psi_\lambda'(x)}-\frac{\psi_0(x)}{\psi_0'(x)}\right) >0.
$$
The opposite conclusion is valid in case the increasing fundamental solution is concave.

It is also worth emphasizing that the certainty equivalent valuation formula presented in Theorem \ref{certaintyequiv} can be extended within the spectrally negative jump diffusion setting to cases where the existence of a smooth solution $\psi(x)$ satisfying ({\bf A1}) is not necessarily straightforward to establish. More precisely, if the lower boundary cannot be attained in finite time then Theorem 8.1 of \citet{kyprianou} implies that
$$
\mathbb{E}_x\left[e^{-r\tau_y}g(X_{\tau_y}); \tau_y<\infty\right]=\begin{cases}
g(x) &x\geq y\\
g(y) \frac{F^{(r)}(x)}{F^{(r)}(y)} &x < y
\end{cases}
$$
where $F^{(r)}(x)$ denotes the $r$-scale function associated with $X_t$ and $\tau_y=\inf\{t\geq 0:X_t\geq y\}$. Since the paths of the underlying process are of unbounded variation, we know that
$F^{(r)}\in C^1(\mathcal{I})$. Therefore, choosing
$$
\hat{\mu}_\lambda(x) = r\frac{F^{(r)}(x)}{{F^{(r)}}'(x)}
$$
as the risk adjusted growth rate of the deterministic process $\hat{X}_t$ then shows that
$$
\mathbb{E}_x\left[e^{-r\tau_y}g(X_{\tau_y})\right] = e^{-rT_y}g(\hat{X}_{T_y}),
$$
where $T_y = \inf\{t\geq 0: \hat{X}_t\geq y\}$. Consequently, {\em if the optimal stopping strategy is known to constitute a standard
single exercise threshold policy, then the value of the optimal policy can always be expressed in terms of a risk adjusted deterministic
valuation}.

\section{Explicit Illustrations}

In this section our objective is to illustrate our main findings
within explicitly parametrized examples based on different
descriptions for the underlying stochastic dynamics. As usually, we
illustrate our findings for the arithmetic L\'{e}vy process
and the geometric L\'{e}vy process since in those cases the
representation obtained in the analysis of our previous sections is
valid.

\subsection{Arithmetic Stochastic Dynamics}

Consider first the arithmetic case
\begin{eqnarray*}
dX_{t}=\mu dt + \sigma dW_{t} - \int_{\mathcal{S}} \gamma z
\tilde{N}(dt,dz),\quad X_0=x
\end{eqnarray*}
where $\mu,\sigma,\gamma\in \mathbb{R}_+$ and $\mathcal{I}=\mathbb{R}$. For simplicity, we assume that $\mathcal{S}=\mathbb{R}_+$. In this case the associated integro-differential
equation
\begin{eqnarray*}
\frac{1}{2} \sigma^{2} \psi''(x)+(\mu+\gamma \lambda
\overline{m}) \psi'(x)-(r+\lambda)\psi(x)+\lambda
\int_{0}^{\infty}\psi(x-\gamma z)\mathfrak{m}(dz)=0
\end{eqnarray*}
has an increasing solution $\psi(x)=e^{k_{1}x}$ where $k_{1}>0$ solves
\begin{eqnarray*}
\frac{1}{2} \sigma^{2} k^2+(\mu+\gamma \lambda
\overline{m})k+\lambda \int_{0}^{\infty}e^{-\gamma z k}
\mathfrak{m}(dz)-(r+\lambda)=0.
\end{eqnarray*}
This equation also implies that if $\mu>0$ then $\lim_{r\downarrow 0}k_1 = 0$.
Consequently, we observe that in that case
$$
\mathbb{P}_x[\tau_y < \infty] = \lim_{r\downarrow 0}\mathbb{E}_x\left[e^{-r\tau_y};\tau_y < \infty\right] = \lim_{r\downarrow 0}e^{k_1(x-y)} = 1.
$$
It is now clear that if the sufficiency conditions of Theorem \ref{mainthm} are satisfied then
the value of the optimal stopping policy can be represented as
\begin{eqnarray} \label{ex1rep}
V_\lambda(x)=e^{k_{1}x} \sup_{y \geq x}\left\{e^{-k_{1}y}g(y)\right\}=
\begin{cases}
g(x), & x \geq x^{\ast},\\
g(x^{\ast})e^{k_{1}(x-x^{\ast})}, & x < x^{\ast},
 \end{cases}
\end{eqnarray}
where $x^{\ast} =\argmax\{e^{-k_{1}x}g(x)\}$ satisfies for a
differentiable $g$ the ordinary first order condition $D_{x}[\ln g(x)]=k_{1}$. It is
also worth pointing out that in accordance with the findings of our
Lemma \ref{domination} we now find that the root $k_1\in
(\tilde{k}_{r}, \tilde{k}_{r+\lambda})$, where
$$
\tilde{k}_\theta = -\frac{\mu +\gamma \lambda \bar{m}}{\sigma^2} +
\sqrt{\left(\frac{\mu +\gamma \lambda
\bar{m}}{\sigma^2}\right)^2+\frac{2\theta}{\sigma^2}}
$$
denotes the positive root of the characteristic equation $\sigma^{2}
k^2+2(\mu+\gamma \lambda \overline{m})k=2\theta$. Consequently, we
observe that in the present setting
$$
e^{\tilde{k}_{r+\lambda}x} \sup_{y \geq
x}\left\{e^{-\tilde{k}_{r+\lambda}y}g(y)\right\}\leq e^{k_{1}x} \sup_{y \geq
x}\left\{e^{-k_{1}y}g(y)\right\}\leq e^{\tilde{k}_{r}x} \sup_{y \geq
x}\left\{e^{-\tilde{k}_{r}y}g(y)\right\}
$$
provided that the maximum exists. The jump risk adjusted discount rate $\theta^\ast$ defined in Corollary \ref{discountcor} for which $x^\ast_{\theta^\ast}=x^\ast$ reads now as
$$
\theta^\ast=r+\lambda\int_{0}^{\infty}(1-e^{-\gamma z k_1})
\mathfrak{m}(dz).
$$
Alternatively, letting the drift coefficient of the associated continuous diffusion $\tilde{X}_t$
to be
$$
\tilde{\mu}=\mu+\gamma \lambda
\overline{m} + \frac{\lambda}{k_1}\int_{0}^{\infty}e^{-\gamma z k_1}
\mathfrak{m}(dz)
$$
results into the equality $x_{r+\lambda}^\ast=x^\ast$.
Finally, as demonstrated in Theorem \ref{certaintyequiv},
choosing  $\hat{\mu}_\lambda(x)=r/k_1$ as the risk-adjusted growth rate implies that
$\hat{V}(x)=V_\lambda(x)$.

It is worth noticing that according to our general results the
strict convexity of the increasing fundamental solution $e^{k_{1}x}$
implies that increased volatility $\sigma$ as well as higher
jump-intensity $\lambda$ increases the value of the optimal stopping
policy and raises the optimal boundary at which the underlying
jump-diffusion should be stopped. An analogous conclusion is naturally valid for the risk adjusted
growth rate $\hat{\mu}_\lambda(x)=r/k_1$ as well. Moreover, since
$$
\lim_{\lambda\downarrow 0}k_1 = -\frac{\mu}{\sigma^2}+\sqrt{\frac{\mu^2}{\sigma^4}+\frac{2r}{\sigma^2}}
$$
we notice that
$$
\hat{\mu}_\lambda(x)-\hat{\mu}_0(x)=r\left[\frac{1}{k_1}-\frac{1}{-\frac{\mu}{\sigma^2}+\sqrt{\frac{\mu^2}{\sigma^4}+\frac{2r}{\sigma^2}}}\right].
$$
We illustrate the risk adjusted growth rate for $\Gamma(a,b)$-distributed jumps in Table \ref{table1} under the assumptions that $\mu=0.04,r=0.05,\gamma=1,a=1$ and $b=1$.
As these numerical values indicate, jump risk has a nontrivial effect on the required risk adjustment in the present setting even when the intensity of the jump process is relatively low.
\begin{table}[!ht]
  \centering
  \begin{tabular}{|c|c|c|c|c|c|}
  \hline
  $\sigma$ & 0.05&0.1&0.15&0.2&0.25 \\
  \hline
  $\hat{\mu}_0-\mu$ & 0.15&0.55&1.10&1.74&2.43\\
  $\hat{\mu}_{0.1}-\mu$ & 3.94&4.12&4.40&4.77&5.21 \\
  $\hat{\mu}_{0.2}-\mu$ &6.51&6.63&6.83&7.11&7.45\\
 \hline
\end{tabular}
  \caption{Risk Adjusted Growth Rates in Percentage Terms}\label{table1}
\end{table}

As a numerical illustration, consider the  \emph{capped option}
reward function
\begin{eqnarray*}
g(x)=(\min(K,x)-I)^+,
\end{eqnarray*}
where we assume $K>I>0$ (cf. \citet{alvarez3}). It is clear that the conditions of Theorem \ref{mainthm}
are satisfied and that
$$
x^\ast = \begin{cases}
I+\frac{1}{k_1} &k_1\geq (K-I)^{-1}\\
K& k_1< (K-I)^{-1}.
\end{cases}
$$
Hence the value of the
optimal stopping problem reads as
\begin{eqnarray*}
V_\lambda(x)=e^{k_{1}x} \sup_{y \geq x}\left\{e^{-k_{1}y}g(y)\right\} =
\begin{cases}
K-I, & x \geq K\\
e^{k_{1}(x-K)} (K-I), & x < K
\end{cases}
\end{eqnarray*}
when $k_1< (K-I)^{-1}$ and as
\begin{eqnarray*}
V_\lambda(x)=e^{k_{1}x} \sup_{y \geq x}\left\{e^{-k_{1}y}g(y)\right\} =
\begin{cases}
K-I, & x > K\\
x-I, & I+\frac{1}{k_1}\leq x < K\\
\frac{1}{k_1}e^{k_1(x-I-1/k_1)} & x < I+\frac{1}{k_1}
\end{cases}
\end{eqnarray*}
when $k_1\geq (K-I)^{-1}$. Especially, when $k_1< (K-I)^{-1}$ we find that
\begin{eqnarray*}
\lim_{x \rightarrow K-}V^{\prime}(x)=k_{1}(K-I) > 0 =
\lim_{x \rightarrow K+} g^{\prime}(x)=\lim_{x \rightarrow
K+} V^{\prime}(x),
\end{eqnarray*}
and there is no smooth fit.

\subsection{Geometric Stochastic Dynamics}
Consider now the geometric Lévy process $Y=\{Y_{t}\}$ with a finite Lévy measure
$\nu=\lambda m$ characterized by the dynamics
\begin{eqnarray} \label{glpsde}
d Y_{t}=Y_{t-}\left\{\alpha dt+\sigma d W_{t}+\lambda
\int_{\mathcal{S}}\gamma(z) \tilde{N}(dt,dz) \right\},
\end{eqnarray}
where both the drift $\alpha$ and the diffusion coefficient $\sigma$
are assumed to be positive. Note that in this case
$\mathcal{I}=\mathbb{R}_{+}$ and the explicit solution $Y_{t}$ equals
\begin{eqnarray} \label{specificglp}
y_{0}\exp\Big\{\tilde{\alpha}t+\sigma
W_{t}+\int_{0}^{t}\int_{\mathcal{S}}\ln(1+\gamma(z))\tilde{N}(ds,dz)\Big\}.
\end{eqnarray}
where $\tilde{\alpha}=\alpha-\frac{1}{2}\sigma^2$. For simplicity
of exposition, we take $\gamma(z)=-z$ and assume that $\mathcal{S}= (0,1)$.

The
integro-differential equation $(\mathcal{G}_{r}\psi)(x)=0$ takes now the form
\begin{eqnarray} \label{glpintegrode}
&&\frac{1}{2} \sigma^{2} x^2 \psi''(x)+\hat{\alpha} x
\psi'(x)-(r+\lambda)\psi(x)+\lambda
\int_{0}^{1}\psi(x-xz)\mathfrak{m}(dz)=0,\quad
\end{eqnarray}
where $\hat{\alpha}=\alpha+\lambda \bar{m}$. By guessing now the
solution to be of form $x^k$, we obtain the characteristic equation
for $k$:
\begin{eqnarray} \label{eqfork}
&&\frac{1}{2} \sigma^{2} k(k-1)+(\alpha+\lambda
\bar{m})k-(r+\lambda)+\lambda \int_{0}^{1}(1-z)^k
\mathfrak{m}(dz)=0.
\end{eqnarray}
It is straightforward to show that if $r>0$ then
\eqref{eqfork} has a positive solution $k_{1}>0$. In that case $\psi(x)=x^{k_{1}}$
is an increasing smooth solution of \eqref{glpintegrode} which
vanishes at $x=0$ and, therefore, satisfies ({\bf A1}). Moreover, if inequality $\alpha < r$ is satisfied,
then $k_1>1$. It is also at this point worth emphasizing that if $\alpha +\lambda \bar{m} > \sigma^2/2$ then \eqref{eqfork} implies that
$\lim_{r\downarrow 0}k_1 = 0$. Consequently, in that case we observe that for all $x\leq y$ we have
$$
\mathbb{P}_x[\tau_y < \infty] = \lim_{r\downarrow 0}\mathbb{E}_x\left[e^{-r\tau_y};\tau_y < \infty\right] = \lim_{r\downarrow 0}\left(\frac{x}{y}\right)^{k_1} = 1.
$$

In light
of our representation of the value of the optimal policy in terms of
an associated nonlinear programming problem, we find that for any
reward function $g$ satisfying the conditions of Theorem \ref{mainthm}, the value of
the optimal stopping policy can be represented as
\begin{eqnarray} \label{gbm1rep}
V_\lambda(x)=x^{k_{1}} \sup_{y \geq x}\left\{y^{-k_{1}}g(y)\right\}=
\begin{cases}
g(x), & x \geq x^{\ast}\\
g(x^{\ast})(x/x^\ast)^{k_1}, & x < x^{\ast},
 \end{cases}
\end{eqnarray}
where $x^{\ast}$ is the unique maximizer of $g/\psi$, i.e. for a
differentiable $g$ the solution of $g'(x^\ast)x^\ast/g(x^\ast) =
k_{1}$.

As in the arithmetic case, we observe that our Theorem
\ref{domination} implies that in the present case the root of the
equation \eqref{eqfork} $k_1$ satisfies the condition $k_1\in
(\hat{k}_{r}, \hat{k}_{r+\lambda})$, where
$$
\hat{k}_{\theta} = \frac{1}{2}-\frac{\alpha + \lambda
\bar{m}}{\sigma^2}+\sqrt{\left(\frac{1}{2}-\frac{\alpha + \lambda
\bar{m}}{\sigma^2}\right)^2+\frac{2\theta}{\sigma^2}}
$$
denotes the positive root of the characteristic equation $\sigma^{2}
k(k-1)+2(\alpha+\lambda \bar{m})k-2\theta=0$. Therefore, we observe that
$$
x^{\hat{k}_{r+\lambda}}\sup_{y\geq x}\left[g(y)y^{-\hat{k}_{r+\lambda}}\right] \leq
x^{k_1}\sup_{y\geq x}\left[g(y)y^{-k_1}\right] \leq
x^{\hat{k}_{r}}\sup_{y\geq x}\left[g(y)y^{-\hat{k}_{r}}\right]
$$
provided that the maximum exists. The jump risk adjusted discount rate $\theta^\ast$ defined in Corollary \ref{discountcor} for which $x^\ast_{\theta^\ast}=x^\ast$ reads in the present geometric setting as
$$
\theta^\ast=r+\lambda\int_{0}^{1}(1-(1-z)^{k_1})
\mathfrak{m}(dz).
$$
Alternatively, letting the drift coefficient of the associated continuous diffusion $\tilde{X}_t$
to be
$$
\tilde{\mu}(x)=(\alpha+\lambda
\bar{m}) x + \frac{\lambda x}{k_1}\int_{0}^{1}(1-z)^{k_1}
\mathfrak{m}(dz)
$$
results into the equality $x_{r+\lambda}^\ast=x^\ast$.
Finally, as demonstrated in Theorem \ref{certaintyequiv}
choosing  $\hat{\mu}_\lambda(x)=rx/k_1$ as the risk-adjusted growth rate implies that
$\hat{V}(x)=V_\lambda(x)$.

It is also clear from our analysis that the increasing fundamental
solution is strictly convex (concave) in this case as well provided that condition $r>\alpha$ ($r<\alpha$) is satisfied. Thus, as our
results in Theorem \ref{incvol} and in Theorem \ref{incintensity}
indicated, increased volatility and higher jump-intensity
increase the value and decelerate exercise timing by increasing the optimal
stopping boundary whenever $r>\alpha$. The opposite comparative static properties are satisfied when $r<\alpha$. As predicted by Theorem \ref{certaintyequiv}, the same conclusions are valid for the risk adjusted
growth rate $\hat{\mu}_\lambda(x)=rx/k_1$ as well. Moreover, since
$$
\lim_{\lambda\downarrow 0}k_1 = \frac{1}{2}-\frac{\alpha}{\sigma^2}+\sqrt{\left(\frac{1}{2}-\frac{\alpha}{\sigma^2}\right)^2+\frac{2r}{\sigma^2}}
$$
we notice that
$$
\hat{\mu}_\lambda(x)-\hat{\mu}_0(x)=rx
\left[\frac{1}{k_1}-\frac{1}{\frac{1}{2}-\frac{\alpha}{\sigma^2}+\sqrt{\left(\frac{1}{2}-\frac{\alpha                                                   }{\sigma^2}\right)^2+\frac{2r}{\sigma^2}}}\right].
$$
We illustrate the risk adjusted growth rate for $\textrm{Beta}(c,d)$-distributed jumps in Table \ref{table2} under the assumptions that $\alpha=0.03,r=0.05,c = 1.25, $ and $d = 5$.
As these numerical values indicate, jump risk has still a nontrivial effect on the required risk adjustment.
\begin{table}[!ht]
  \centering
  \begin{tabular}{|c|c|c|c|c|c|}
  \hline
  $\sigma$ & 0.05&0.1&0.15&0.2&0.25 \\
  \hline
  $\frac{\hat{\mu}_0(x)}{x}-\alpha$ & 0.08&0.27&0.49&0.7&0.89 \\
  $\frac{\hat{\mu}_{0.1}(x)}{x}-\alpha$ & 0.25&0.4&0.58&0.77&0.94 \\
  $\frac{\hat{\mu}_{0.2}(x)}{x}-\alpha$ &0.38&0.5&0.66&0.83&0.98\\
 \hline
\end{tabular}
  \caption{Risk Adjusted Growth Rates in Percentage Terms}\label{table2}
\end{table}
A case where $\psi_\lambda(x)$ is concave and the comparative statics are reversed is illustrated in Table \ref{table3} under the parameter specifications $\alpha=0.05, r=0.03, c = 1.25, $ and $d = 5$.
\begin{table}[!ht]
  \centering
  \begin{tabular}{|c|c|c|c|c|c|}
  \hline
  $\sigma$ & 0.05&0.1&0.15&0.2&0.25 \\
  \hline
  $\frac{\hat{\mu}_0(x)}{x}-\alpha$ & -0.05&-0.19&-0.39&-0.63&-0.86 \\
  $\frac{\hat{\mu}_{0.1}(x)}{x}-\alpha$ & -0.19&-0.32&-0.5&-0.72&-0.93 \\
  $\frac{\hat{\mu}_{0.2}(x)}{x}-\alpha$ &-0.32&-0.44&-0.6&-0.79&-0.98\\
 \hline
\end{tabular}
  \caption{Risk Adjusted Growth Rates in Percentage Terms}\label{table3}
\end{table}

In order to present an explicit illustration, let
$g(x)=\max(a x^b-K,0)$ with $a,K>0$ and $b\in(0,1]$. This case contains the standard
American call option (take $a=b=1$) as well as the rewards of many
optimal stopping problems associated with irreversible investment
decisions (see \citet{boyarchenko} for a very readable account on
the relationship between perpetual American options and irreversible
investment decisions).  If $k_1 > b$, then the function $g/\psi$ attains a unique maximum
at
$$
x^\ast = \left(\frac{k_1K}{(k_1-b)a}\right)^{1/b}.
$$
By Theorem \ref{mainthm}, the value of the
optimal stopping problem can now be represented as
\begin{eqnarray*}
&&V_\lambda(x)=x^{k_{1}} \sup_{y \geq x}\left\{y^{-k_{1}}(a y^b-K)\right\} =
\begin{cases}
a x^b-K, & x \geq x^{\ast}\\
(a {x^\ast}^b-K)(x/x^{\ast})^{k_{1}}, & x <
            x^{\ast}
\end{cases}
\end{eqnarray*}
provided that condition $k_1>b$ is satisfied. As usually in the real options
literature on irreversible investment, we notice that the option
multiplier $P=k_1/(k_1-b)$ determines the sensitivity of the optimal exercise threshold $x^\ast$ with respect to changes in volatility. This multiplier reads
as $\hat{P}_{\theta}=\hat{k}_\theta/(\hat{k}_\theta-b)$,$\theta=r, r+\lambda$,  for the stopping problems of the
associated continuous diffusion. We illustrate these option
multipliers in the convex setting where $r>\alpha$ in Figure \ref{kuva1} for $\textrm{Beta}(c,d)$-distributed jumps
under the assumption that $\alpha = 0.025, r = 0.05, \lambda = 0.02,
a = b = K=1, c = 1.25, d = 5$.
\begin{figure}[!ht]
\begin{center}
\includegraphics{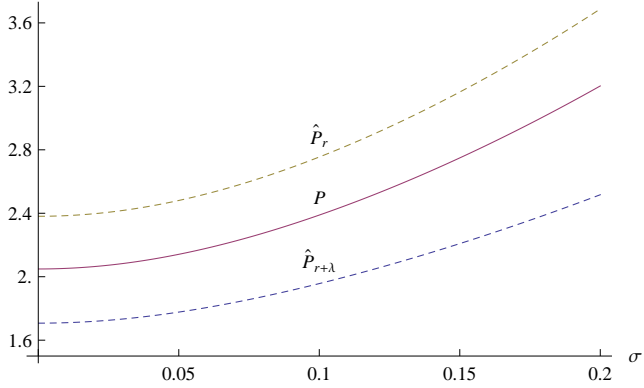}
\end{center}
\caption{The impact of volatility on the option multipliers $P$,
$\hat{P}_{r+\lambda}$, and $\hat{P}_r$}\label{kuva1}
\end{figure}
As Figure \ref{kuva1} indicates, the option multipliers are increasing as
functions of the underlying volatility coefficient. Moreover, the
option multipliers satisfies the condition $P\in
(\hat{P}_{r+\lambda},\hat{P}_{r})$ as was established in our Theorem
\ref{domination}. The values of the optimal stopping problems are
graphically illustrated for $\textrm{Beta}(c,d)$-distributed jumps
in Figure \ref{kuva2} under the assumption that $\alpha = 0.025, r = 0.05, \lambda = 0.02,
a = b = K=1, c = 1.25, d = 5$, and  $\sigma=0.1$ (which
implies that $x^\ast = P =2.39, x^\ast_{0.07} = \hat{P}_{0.07} = 1.96$, and
$x^\ast_{0.05} = \hat{P}_{0.05} = 2.75$)
\begin{figure}[!ht]
\begin{center}
\includegraphics{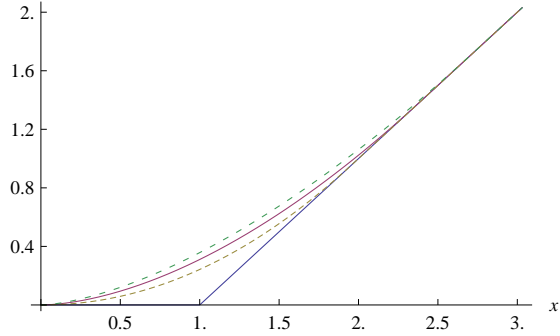}
\end{center}
\caption{The exercise payoff $(x-1)^+$ and the values $V_\lambda(x),
\tilde{V}_r(x),$ and $\tilde{V}_{r+\lambda}(x)$}\label{kuva2}
\end{figure}
Figure \ref{kuva2} illustrates explicitly the results of our Theorem
\ref{domination} for the values of the stopping problems. It is of
interest to notice that as was predicted by Theorem
\ref{domination}, the value $V_\lambda(x)$ of the considered stopping
problem is sandwiched between the two values
$\tilde{V}_{r+\lambda}(x)$ and $\tilde{V}_{r}(x)$.

It is worth noticing that in the present
example the maximizing threshold $x^\ast$ exists even in cases where $k_1<1$,
that is, even when the fundamental solution is not convex as a
function of the state. Hence, for the considered exercise payoff the condition  $r>\alpha$ guaranteeing the convexity of $x^{k_1}$ can be relaxed. If $0 < r\leq
\alpha$ then $k_1\in (0,1]$. Under those circumstances the sign of the relationship
between increased volatility and the optimal exercise strategy is
reversed as the root $k_1$ becomes an increasing function of
volatility. More precisely, if $r\leq\alpha$ then
 $\partial x^\ast/\partial\sigma =
-(x^\ast/k_1^2) \partial k_1/\partial\sigma < 0$ and $\partial
x^\ast/\partial\lambda = -(x^\ast/k_1^2) \partial
k_1/\partial\lambda < 0$. We illustrate this observation graphically
for $\textrm{Beta}(c,d)$-distributed jumps in Figure \ref{kuva3} under the
assumption that $\alpha = 0.04, r = 0.02, \lambda = 0.01, a = K = 1, b=0.2,
c = 1.25$, and $d = 2$.
\begin{figure}[!ht]
\begin{center}
\includegraphics{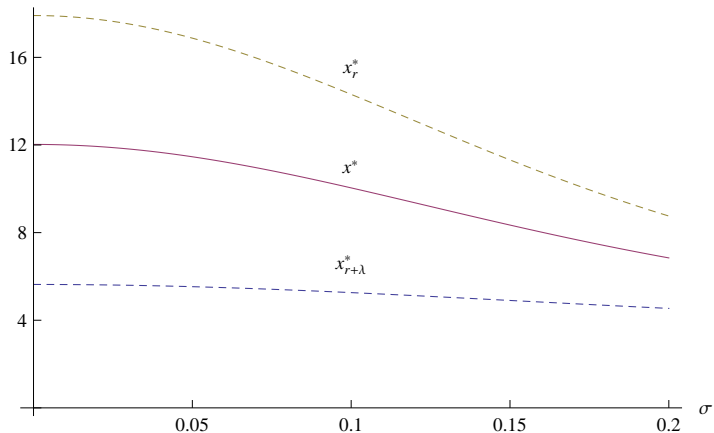}
\end{center}
\caption{The impact of volatility on the exercise thresholds}\label{kuva3}
\end{figure}
Figure \ref{kuva3} illustrates how the sign of the relationship between
increased volatility and the optimal exercise threshold is reversed
as the increasing fundamental solution becomes concave. It is worth
noticing that even in this case the order of the exercise thresholds
remain naturally unchanged since the ordering of the values $V_\lambda(x)$,
$\tilde{V}_{r+\lambda}(x)$, and $\tilde{V}_{r}(x)$ is based only on
nonnegativity and monotonicity.

\section{Conclusions}

In this study we generalized a representation result known to hold
for continuous linear diffusions to include a class of spectrally
one-sided L\'{e}vy diffusions: given some conditions, the optimal
stopping problem for a one-dimensional spectrally negative L\'{e}vy
diffusion can be reduced to an ordinary nonlinear programming
problem.

Considering the fact that optimal stopping problems feature
prominently in pricing of American options and in real options
theory, reducing the stopping problem of a L\'{e}vy diffusion into a
standard programming problem can significantly facilitate the
ongoing research on these areas of mathematical finance. We
demonstrated this by deriving several interesting comparative static
properties of spectrally negative L\'{e}vy diffusions using our
representation, and found out that a useful tool in obtaining bounds
for the value of the optimal stopping of a L\'{e}vy diffusion is the
corresponding stopping problem for an associated continuous
diffusion. By choosing the discount rates appropriately, we were able
to sandwich the value of the considered optimal stopping
problem between the known values of two stopping
problems of the associated continuous diffusion.

Our study indicates that
the impact of volatility on the optimal policy and its value in our
setting is similar to the continuous case: for values convex
(concave) below the optimal threshold, increased risk decelerates
(accelerates) rational investment by expanding or leaving unchanged
(shrinking or leaving unchanged) the continuation region and
increasing or leaving unchanged (decreasing or leaving unchanged)
the optimal threshold and the value of waiting. The impact of
downside risk as measured by the intensity of the compound Poisson
jump process on the optimal value was found out to be similar to the
impact of the diffusion risk (as measured by the volatility). We
also established that the key factor determining the relevant
convexity/concavity properties of the value is (provided that it
exists) the increasing fundamental solution of the associated
integro-differential equation, which is process-specific. Thus we
saw that the impact of volatility or downside risk is not dependent
on the precise form of the exercise payoff, as long as the
conditions for the optimality of the stopping rule characterized by
a single threshold are met.

In addition to their usefulness in obtaining information about the
comparative static properties of L\'{e}vy diffusions and their
relations (similarities and differences) to the continuous diffusion
case, our results raise an interesting question on the scope of
applicability of our representation. This boils largely down to the
question: when is the assumption on the existence of an increasing
smooth solution to the characteristic integro-differential equation
true, and can conveniently verifiable sufficient conditions for this
be found? The technique developed in \citet{Rakkolainen2008} based
on Frobenius series solutions appears to be a promising approach which may provide
further insights into the considered class of problems.\\

\noindent{\bf Acknowledgments:}
The authors wish to thank \emph{Erik Baurdoux} and
\emph{Olli Wallin} for their insightful comments. The financial
support to Luis H. R. Alvarez E. from the {\it OP Bank Research Foundation} is gratefully
acknowledged.

\end{document}